\documentclass{article} % For LaTeX2e
\usepackage{jmlr2e}

\usepackage[colorlinks,urlcolor=cyan,citecolor=blue,linkcolor=magenta]{hyperref} 
\usepackage{hyperref}
\usepackage{url}
\usepackage{physics}

\usepackage{booktabs}
\usepackage{multicol}
\usepackage[export]{adjustbox}
\usepackage{multirow}
\usepackage{tabularx}
\usepackage{makecell}
\usepackage[shortlabels,inline]{enumitem}

%%%%% NEW MATH DEFINITIONS %%%%%
\usepackage{amsmath,amsfonts,bm,amsthm}
\usepackage{thmtools}
\usepackage{thm-restate}
% Highlight a newly defined term

% \mathcal

\def\cO{{\mathcal{O}}}
\def\cP{{\mathcal{P}}}

% \mathbb

\def\bR{{\mathbb{R}}}

% The true underlying data generating distribution

% The empirical distribution defined by the training set

% The model distribution

% Stochastic autoencoder distributions

 % Laplace distribution

% Wolfram Mathworld says $L^2$ is for function spaces and $\ell^2$ is for vectors
% But then they seem to use $L^2$ for vectors throughout the site, and so does
% wikipedia.

 % See usage in notation.tex. Chosen to match Daphne's book.

% \DeclareMathOperator{\Tr}{Tr}

%%%%%%%%%%%%%%%% Theorem setup. %%%%%%%%%%%%%%
\theoremstyle{plain} \numberwithin{equation}{section}
\newtheorem{theorem}{Theorem}[section]
\numberwithin{theorem}{section}
\newtheorem{lemma}[theorem]{Lemma}

\newtheorem{proposition}[theorem]{Proposition}

\newtheorem{conjecture}[theorem]{Conjecture}

\theoremstyle{definition}
\newtheorem{definition}[theorem]{Definition}

\newtheorem{remark}[theorem]{Remark}
\newtheorem{example}[theorem]{Example}

\newtheorem{algorithm}[theorem]{Algorithm}
\theoremstyle{plain}

%%%%%%%%%%%%%%%% Algebra. %%%%%%%%%%%%%%
\newcommand{\ring}{\mathcal{P}}

\newcommand{\rringle}[1]{\mathcal{P}_{\le {#1}}}
\newcommand{\fullring}{\mathbb{R}[x_1,\ldots, x_n]}
\newcommand{\ideal}[1]{\langle{#1}\rangle}

\newcommand{\xs}{x_1,\ldots, x_n}

\newcommand{\Span}[1]{\mathrm{span}({#1})}
\newcommand{\videal}[1]{\mathcal{I}({#1})}
\newcommand{\degree}[1]{\mathrm{deg}({#1})}
\newcommand{\points}{\bm{x}_1, \ldots, \bm{x}_m}
\newcommand{\gnvec}[2]{\mathfrak{n}_{\mathrm{g}, {#1}}({#2})}

%%%%%%%%%%%%%%%%%%%%Francy algorithm coloring  %%%%%%%%%%%%%%%%%%%%%%%%%%%

% \usepackage[ruled, linesnumbered]{algorithm2e} %
\usepackage{xcolor} % 
\usepackage{tikz}
\usetikzlibrary{fit,calc}
%define a marking command

%define a boxing command, argument = colour of box

%define some colours according to algorithm parts (or any other method you like)
\colorlet{pink}{red!40}
\colorlet{lightblue}{blue!30}
\colorlet{lightgreen}{green!30}

%Change the separation between the columns in the matrix.
\setlength\arraycolsep{2pt}

%This is so that the algorithm does not contain a semicolon in algorithm2e
% \DontPrintSemicolon

\title{Monomial-agnostic computation of vanishing ideals}

\author{
       \name \hspace{-3mm} Hiroshi Kera \email \texttt{\href{mailto:kera@chiba-u.jp}{kera@chiba-u.jp}} \\
       \addr 
       Graduate School of Engineering, \\
        Chiba University, \\
        11-33 Yayoi-cho, Inage-ku, Chiba-Shi, Chiba, Japan
       \AND
       \name Yoshihiko Hasegawa \email \texttt{\href{mailto:hasegawa@biom.t.u-tokyo.ac.jp}{hasegawa@biom.t.u-tokyo.ac.jp}} \\
       \addr 
        Graduate School of Information Science and Technology, \\
        The University of Tokyo, \\
        7-3-1 Hongo, Bunkyo-ku, Tokyo, Japan
       }

\begin{document}

\maketitle

\begin{abstract}
In the last decade, the approximate basis computation of vanishing ideals has been studied extensively in computational algebra and data-driven applications such as machine learning. However, symbolic computation and the dependency on term order remain essential gaps between the two fields. In this study, we present the first \textit{monomial-agnostic} basis computation, which works fully numerically with proper normalization and without term order. 
This is realized by gradient normalization, a newly proposed data-dependent normalization that normalizes a polynomial with the magnitude of gradients at given points. The data-dependent nature of gradient normalization brings various significant advantages: i) efficient resolution of the spurious vanishing problem, the scale-variance issue of approximately vanishing polynomials, without accessing coefficients of terms, ii) scaling-consistent basis computation, ensuring that input scaling does not lead to an essential change in the output, and iii) robustness against input perturbations, where the upper bound of error is determined only by the magnitude of the perturbations. Existing studies did not achieve any of these. As further applications of gradient information, we propose a monomial-agnostic basis reduction method and a regularization method to manage positive-dimensional ideals. 
\end{abstract}

\section{Introduction}\label{sec:introduction}
Let $X\subset\bR^n$ be a finite set of points. The vanishing ideal $\videal{X}\subset \fullring$ of $X$ is the set of polynomials that vanish for $X$:
\begin{align}
    \videal{X} = \qty{ g\in\fullring \mid \forall \bm{x} \in X, g(\bm{x}) = 0}.
\end{align}
There are various algorithms to compute generators of a vanishing ideal. Particularly, approximate computation of generators, where approximate vanishing $\abs{g(\bm{x})} \le \epsilon$ for a potentially perturbed point $\bm{x} \in X$ and predesignated threshold $\epsilon \ge 0$ are considered, have been recently studied in computational algebra~\citep{abbott2008stable,heldt2009approximate,fassino2010almost,robbiano2010approximate,fassino2013simple,limbeck2013computation,kera2022border} and further introduced to various data-centric applications such as machine learning~\citep{livni2013vanishing,kiraly2014dual,hou2016discriminative,kera2016vanishing,kera2018approximate,wirth2022conditional,wirth2023approximate}, computer vision~\citep{zhao2014hand,yan2018deep}, robotics~\citep{iraji2017principal,antonova2020analytic}, nonlinear systems~\citep{torrente2009application,kera2016noise,karimov2020algebraic,karimov2023identifying}, and signal processing~\citep{wang2018nonlinear, wang2019polynomial}. 

The approximate setup and the new demands from such applications raise interesting problems that do not have to be considered in the exact setup in computational algebra. For example, approximate vanishing $\abs{g(\bm{x})} \le \epsilon$ raises the fundamental need for normalization of polynomial because $\alpha\cdot g$ can be arbitrarily turned to be either approximately vanishing or not by changing the scaling $\alpha \in \mathbb{R}$. Such a scale-variant nature does not occur in the exact case with $\epsilon = 0$. Another example is that the dependence of algorithms and generators on a term order is considered unfavorable in many of the applications mentioned above. 
In such applications, indeterminates or variables have specific meanings (e.g., temperature and height), and the term order inevitably injects an uneven variable importance, biasing the results of data analysis. Besides, many applications are ruled by numerical computation, which allows a considerable hardware acceleration using GPUs rather than symbolic computation; thus, heavy reliance on computer-algebraic tools loses the consistency of the implementation. 
Consequently, vanishing component analysis~(VCA;~\cite{livni2013vanishing}), a term-order-free, numerical algorithm, is more widely used than those developed in computational algebra. 

While the practical utility of VCA and its variants has been reported in various studies, \cite{kera2019spurious} pointed out that these algorithms do not equip normalization and thus theoretically suffer from the aforementioned scale-variant issue. These algorithms construct a polynomial $g \in \fullring$ from an $\fullring$-combination of several polynomials,
\begin{align}
    g =p_1 h_1 + \cdots p_s h_s, \quad \text{where}\ \  p_i, h_i \in \fullring,\ \ i= 1, \ldots, s.
\end{align}
Polynomials $h_1, \ldots, h_s$ are also constructed similarly. Polynomial expansions and arrangement of terms are necessary if one wants to perform a coefficient normalization, a widely used method that normalizes a polynomial to have the unit square sum of term coefficients. As these cannot be straightforwardly done by numerical computation, VCA and its variants do not equip coefficient normalization. It is also worth noting that even if possible, it is still not trivial how one can integrate the normalization with the optimization problems solved in these algorithms for constructing the best approximately vanishing polynomials. \cite{kera2019spurious} proposed a framework to incorporate coefficient normalization into VCA-family algorithms; however, it theoretically suffers from the exponential time and space complexity in the number of variables because of the exponential growth of the length of coefficient vectors.

In this study, we achieve the first efficient, numerical, and term-order-free approximate basis computation---or \textit{monomial-agnostic} basis computation---by introducing gradient normalization, a novel data-driven normalization. Specifically, gradient normalization of polynomial $g \in \fullring$ over a finite point set $X \subset \mathbb{R}^n$ is defined by 
\begin{align}
    \frac{g}{\sqrt{\sum_{\bm{x} \in X} \norm{\nabla g(\bm{x})}^2_2}}, \quad \text{where}\ \ \nabla g(\bm{x}) = \qty( \frac{\partial g}{\partial x_1} (\bm{x}), \ldots, \frac{\partial g}{\partial x_n}(\bm{x}))^{\top} \in \mathbb{R}^n.
\end{align}
The norm $\norm{\ \cdot\ }_2$ denotes the Euclidean norm. The denominator $\sqrt{\sum_{\bm{x} \in X} \norm{\nabla g(\bm{x})}^2_2}$, or the \textit{gradient semi-norm}, may vanish for $g \ne 0$, and thus the normalization is not always valid. However, we can show that the normalization is valid for all the non-constant polynomials that we need to handle in VCA-family algorithms~\citep{livni2013vanishing,kiraly2014dual,hou2016discriminative,kera2018approximate,yan2018deep}. Gradient normalization can be computed efficiently and fully numerically by reusing the computation in the algorithms. Further, its data-driven nature leads to more robust computation against point perturbation or scaling, which cannot be attained by coefficient normalization. 
Notably, with gradient normalization, the output of VCA for $X_{\alpha} = \{\alpha\cdot \bm{x}_1, \ldots, \alpha\cdot\bm{x}_m\}$ and $\alpha\cdot \epsilon$ returns consistent for all $\alpha > 0$; the outputs for distinct $\alpha_1, \alpha_2 \ne 0$ are generally different (reflecting the scale difference) but there is an explicit rule to transform one another. This is not the case with coefficient normalization but is crucial for robust computation with real data points as the scaling of points is a common preprocessing for computer-algebraic basis computation algorithms and data-centric applications so that one can avoid numerical errors caused by the underflow or overflow. 
Including such a scale consistency, this study theoretically and empirically investigates the advantages of gradient normalization over coefficient normalization in the context of approximate computation of generators of vanishing ideal.\footnote{The present study is the extended version of a conference paper~\citep{kera2020gradient} with rich additional contents. Examples are as follows: 
(i) complete proofs of the relevant theorems, propositions, and lemmas, 
(ii) new analysis that shows normalization leads to a reduced basis size~(Proposition~\ref{prop:ub-basis-size}),
(iii) the scaling \textit{inconsistency} of the coefficient normalization~(Proposition~\ref{prop:scaling-inconsistency}),
 (iv) a perturbation analysis~(Proposition~\ref{prop:perturbation}), 
(v) a preprocessing method for faster gradient normalization~(Appendix~\ref{app:preprocessing}), 
(vi) a new heuristics to terminate the basis computation for positive-dimensional ideals~(Section~\ref{sec:positive-ideal}), 
(vii) extensive numerical experiments to validate the advantages of gradient normalization~(Section~\ref{sec:experiments}).
(viii) an open-source Python library MAVI for the fully numerical monomial-agnostic computation of vanishing ideals (\url{https://github.com/HiroshiKERA/monomial-agnostic-vanishing-ideal}), which supports three popular backends, NumPy~\citep{numpy}, JAX~\citep{jax}, and PyTorch~\citep{pytorch}, and runs with GPUs. 
}

\paragraph{Organization of the paper.} Section~\ref{sec:preliminaries} sets up basic definitions, notations, and background knowledge associated with approximate vanishing polynomials, such as the normalized VCA and the spurious vanishing problem. Section~\ref{sec:gradient-normalization} introduces gradient normalization and investigates its validity and computation. Then, Section~\ref{sec:advantages-of-gradient-normalization} presents several advantageous properties of gradient normalization. Gradient information is beneficial not only for normalization, which is discussed in Section~\ref{sec:futher-applications}. Lastly, Section~\ref{sec:experiments} demonstrates the effectiveness of gradient normalization through numerical experiments. 

\section{Preliminaries}\label{sec:preliminaries}
\subsection{Basic definitions and notations}
Throughout the paper, we focus on the polynomial ring $\ring = \mathbb{R}[x_1,x_2,\ldots,x_n]$ with indeterminates $\xs$ and a finite set of points $X \subset \bR^n$. We denote the restrictions of a polynomial set $H \subset \ring$ to degree $t$ and degree up to $t$ by $H_{t}$ and $H_{\le t}$, respectively. A set of several degree-$t$ polynomials may be also denoted with a subscript $t$ (e.g., $C_t$).
\begin{definition}\label{def:vanishing-ideal}
The \textbf{vanishing ideal} $\videal{X}\subset\ring$ of a finite set $X\subset\mathbb{R}^n$ is the set of all polynomials that take the zero value (i.e., vanish) for any point in $X$:
\begin{align}
\videal{X} & =\left\{ g\in\ring\mid\forall\bm{x}\in X,g(\bm{x})=0\right\}.
\end{align}
\end{definition}
\begin{definition}
Given a set of points $X = \{\points\} \subset \mathbb{R}^n$, the \textbf{evaluation vector} of a polynomial $h\in\ring$ with respect to $X$ is defined by
\begin{align}
h(X) & =\begin{pmatrix}h(\bm{x}_{1}) & h(\bm{x}_{2}) & \cdots & h(\bm{x}_m)\end{pmatrix}^{\top}\in\mathbb{R}^{m}.
\end{align}
For a set of polynomials $H=\left\{ h_{1},h_{2},\ldots,h_r\right\} \subset\ring$,
the \textbf{evaluation matrix} of $H$ with respect to $X$ is defined by
\begin{align}
	H(X) & = \left(\begin{array}{cccc}h_{1}(X) & h_{2}(X) & \cdots & h_r(X)\end{array}\right)\in\mathbb{R}^{m\times r}. 
\end{align}
The gradient of $h$ at $\bm{x}$ will be denoted by $\nabla h(\bm{x}) = \qty((\partial{h}/{\partial x_1})(\bm{x}),\ldots, (\partial{h}/{\partial x_n})(\bm{x}) )^{\top} \in \bR^n$.
\end{definition}
As shown by the definition of the vanishing ideal, we are interested in the evaluation values of polynomials at the given set of points $X$. Hence, a polynomial $h$ can be represented by its evaluation vector $h(X)$, which link the product and weighted sum of polynomials with linear-algebraic operations. The product of $h_1,h_2\in \ring$ corresponds to $h_1(X)\odot h_2(X)$, where $\odot$ denotes the element-wise product; the weighted sum $w_{1}h_1 + w_2h_2$ with $w_1,w_2\in\mathbb{R}$ corresponds to $w_{1}h_1(X) +  w_{2}h_2(X)$. For convenience, we define the following notations. 
\begin{definition}
 Let  $H=\{h_1,h_2,\ldots,h_r\}\subset \ring$ be a set of $r$ polynomials, and let $W=\mqty(\bm{w}_1 & \bm{w}_2 & \cdots \bm{w}_{s})\in\mathbb{R}^{r\times s}$ with $\bm{w}_i = (w_{i1}, \ldots, w_{ir})^{\top} \in \bR^r$ for $i=1,\ldots, s$.
 The products $H\bm{w}_i$ and $HW$ are respectively defined by 
 \begin{align}
 	H\bm{w}_i =\sum_{j=1}^{r}w_{ij} h_j 
  \quad\text{ and }\quad
 	HW =\{H\bm{w}_1,H\bm{w}_2,\ldots,H\bm{w}_{s}\}.
 \end{align}
 Note that we have $(H\bm{w})(X) = H(X)\bm{w}$ and $(HW)(X) = H(X)W$ using these notations. 
\end{definition}
\begin{definition} Let $X \subset \bR^n$ be a finite set of points.
A polynomial $g\in\ring$ is \textbf{$\epsilon$-approximately vanishing} for $X$ if $\norm{g(X)}_2\le\epsilon$, where $\epsilon \ge 0$, and $\norm{\,\cdot\,}_2$ denotes the Euclidean norm. Otherwise, $g$ is called \textbf{$\epsilon$-nonvanishing}.
\end{definition}
We may drop ``$\epsilon$-'' when no specific $\epsilon$ is in mind. When $\epsilon = 0$, we may use \textit{exact vanishing} for emphasis.

\paragraph{Other definitions and notations}
Let $H = \{h_1, \ldots, h_r\} \subset \ring$ and $X = \{\points\} \subset \bR^n$. The $\bR$-span of $H$ is denoted by $\Span{H} = \{\sum_{i=1}^r v_i h_i\mid v_1, \ldots, v_r\in\mathbb{R}\}$. 
We also use the same notation to denote the column space of matrix, e.g., $\Span{H(X)} = \{\sum_{i=1}^s v_ih(X) \mid v_1, \ldots, v_r\in\mathbb{R}\}$.
The ideal generated by $H$ is denoted by $\ideal{H} = \qty{ \sum_{i=1}^r q_i h_i \mid q_1, \ldots, q_r \in \ring }$.
The total degree of polynomial $h\in\ring$ is denoted by $\mathrm{deg}(h)$, and the degree with respect to $x_k$ is denoted $\mathrm{deg}_k(h)$. The cardinality of set $A$ will be denoted by $\abs{A}$. We denote by $\mathrm{diag}(d_1, \ldots, d_s)$ the diagonal matrix of size $s$ with $d_1, \ldots, d_s$ along its diagonals. The identity matrix of size $s$ is denoted by $E_s \in \bR^{s\times s}$. 

\subsection{The normalized VCA}
Our idea of using gradients is general enough to be integrated with several basis computation algorithms of vanishing ideals. However, to avoid an unnecessarily abstract discussion, we focus on the normalized version of VCA~\citep{kera2019spurious}.\footnote{While this algorithm was originally named the simple basis construction algorithm by ourselves, we renamed it as the normalized VCA because this gives a better intuition on the algorithm.} 
The normalized VCA is free of term order and the spurious vanishing problem. Given a finite point set $X\subset\bR^n$ and a threshold $\epsilon \ge 0$, the algorithm proceeds from lower to higher degree, constructing two degree-$t$ polynomial sets $F_t, G_t \subset \ring$ at the $t$-th iteration based on the results at the lower degree. The former set $F_t$ collects nonvanishing polynomials, while the latter $G_t$ collects vanishing polynomials.

\begin{algorithm}{\bf (The normalized VCA)}\label{alg:NVCA}\\
Let $X\subset\mathbb{R}^n$ be a finite set of points and $\epsilon\ge 0 $ be a fixed threshold. Let $ F = \{f_0\}, G = \{\,\} \subset\ring$, where $f_0$ may be any nonzero constant polynomial. Perform the following procedures for $t = 1,2,\ldots$ until the termination criterion is met. 
\begin{enumerate}
\item[{\bf S1}] 
Set up pre-candidate polynomials of degree $t$ by $C_1^{\mathrm{pre}}=\{x_1,x_2,\ldots,x_n\}$ if $t=1$ and otherwise,
\begin{align}
    C_t^{\mathrm{pre}} = \{pq \mid p\in F_1, q\in F_{t-1}\}.
\end{align}
 Then, construct the final candidate polynomials $C_t$ through the following projection,
\begin{align}\label{eq:orthogonalization}
     C_{t} &= C_{t}^{\mathrm{pre}} - F_{\le t-1}F_{\le t-1}(X)^{\dagger}C_{t}^{\mathrm{pre}}(X),
 \end{align}
 where $\,\cdot\,^{\dagger}$ denotes the pseudo-inverse of matrix.
 \item[{\bf S2}] Solve the following generalized eigenvalue problem: 
\begin{align}\label{eq:gep}
    C_t(X)^{\top}C_t(X)V = \mathfrak{n}(C_t)^{\top}\mathfrak{n}(C_t)V\Lambda, 
\end{align}
with generalized eigenvectors $V = \mqty(\bm{v}_1, \ldots, \bm{v}_{\abs{C_t}})$, generalized eigenvalues $\Lambda = \mathrm{diag}(\lambda_1,\lambda_2,\ldots,\lambda_{|C_t|})$, and a normalization matrix $\mathfrak{n}(C_t)\in\mathbb{R}^{\ell\times |C_t|}$, which will be introduced shortly~(cf.~Definition~\ref{def:normalization-mapping}).
\item[{\bf S3}] Construct degree-$t$ basis polynomials by linearly combining polynomials in $C_t$ with $\bm{v}_1,\bm{v}_2,\ldots,\bm{v}_{|C_t|}$,
\begin{align}
    F_t = \qty{C_t\bm{v}_i\mid \sqrt{\lambda_i} > \epsilon} \quad\text{and}\quad
    		G_t = \qty{C_t\bm{v}_i\mid \sqrt{\lambda_i} \le \epsilon}.
\end{align}
Append $F_t$ to $F$ and $G_t$ to $G$, respectively.
If $|F_t| = 0$, return $(F, G)$ and terminate. Otherwise, increment $t$ by 1 and go to \textbf{S1}. 
\end{enumerate}
\end{algorithm}

\begin{remark}
At {\bf S1}, Eq.~\eqref{eq:orthogonalization} implies the orthogonal projection of $C_{t}^{\mathrm{pre}}(X)$ to the complementary subspace of the column space of $F_{\le t-1}(X)$:
\begin{align}
C_{t}(X) &= \qty(E_{\abs{C_t}} - F_{\le t-1}(X)F_{\le t-1}(X)^{\dagger})C_{t}^{\mathrm{pre}}(X).    
\end{align} 
We will see that the evaluation vector of any polynomial of degree up to $t-1$ belongs to $\Span{F_{\le t-1}(X)}$, and the orthogonal projection gives us degree-$t$ candidate polynomials whose evaluation vectors are orthogonal to those of any lower-degree polynomials.
\end{remark}

\begin{remark}\label{rem:eov}
At {\bf S3}, a polynomial $C_t\bm{v}_i$ is classified as an $\epsilon$-approximately vanishing polynomial if $\sqrt{\lambda_i}\le \epsilon$, because $\sqrt{\lambda_i}$ equals the extent of vanishing of $C_t\bm{v}_i$ for $X$, i.e.,
\begin{align}
    \norm{(C_{t}\bm{v}_{i})(X)}_2  =\sqrt{\bm{v}_{i}^{\top}C_{t}(X)^{\top}C_{t}(X)\bm{v}_{i}}=\sqrt{\lambda_{i}}.
\end{align}
\end{remark}

\begin{example}
    Let $X = \{ (1.0, 1.0), (0.1, 0), (-1.0, -1.0)\} \subset \bR^2$ and $\epsilon = 0.1$. We consider a simplified case with $\mathfrak{n}(C_t)^{\top}\mathfrak{n}(C_t) = E_{|C_t|}$. The normalized VCA for $(X, \epsilon)$ runs as follows.\footnote{The values are rounded for readability.} First, we initialize $F = \{1\}$, $G = \{\,\} \subset \bR[x, y]$, and $t=1$.
\begin{enumerate}
\item[{\bf S1}] Since $t=1$, we set $C_1^{\mathrm{pre}}=\{x, y\}$. 
 Then, we have
\begin{align}
     C_{1} &= \{x, y\} - \{1\} \mqty(1.00 \\ 1.00 \\ 1.00)^{\dagger}\mqty(1.00 & 1.00 \\ 0.10 & 0.00 \\ -1.00 & -1.00) = \{x - 0.033, y\}.
 \end{align}
 \item[{\bf S2}] From the following generalized eigenvalue problem, 
\begin{align}
    \mqty(0.96 & 1.00 \\ 0.06 & 0.00 \\ -1.03 & -1.00)^{\top}\mqty(0.96 & 1.00 \\ 0.06 & 0.00 \\ -1.03 & -1.00)V = V\Lambda,
\end{align}
we have generalized eigenvalues $\lambda_1 = 4.00, \lambda_2 = 3.33 \times 10^{-3}$ and the corresponding generalized eigenvectors $\bm{v}_1 = (-0.707, -0.706)^{\top}, \bm{v}_2 = (0.706, -0.707)^{\top}$.
\item[{\bf S3}] As $\lambda_1 > \epsilon > \lambda_2$, we have 
\begin{align}
    F_1 = \qty{C_1\bm{v}_1} = \qty{f_1 = -0.707x - 0.706y + 0.0235} \quad\text{and}\quad
    		G_1 = \qty{C_1\bm{v}_2} = \qty{g_1 = 0.706x - 0.707y - 0.023}.
\end{align}
Since $|F_1| \ne 0$, we continue with $t=2$. 
\item[{\bf S1}] As $t=2 > 1$, we construct 
$C_2^{\mathrm{pre}}=\{f_1^2\} = \{(-0.707x - 0.706y + 0.0235)^2\}$. 
 Then, we have
\begin{align}
     C_{2} &= \{(-0.707x - 0.706y + 0.0235)^2\} - \{1, -0.707x - 0.706y + 0.0235\} 
     \mqty(1.00 & -1.390\\ 1.00 &-0.047 \\ 1.00 & 1.437)^{\dagger}\mqty(1.933 \\ 0.002 \\ 2.067) \\
     &= \{0.5008x^2 + 0.9999xy + 0.01663x + 0.4991y^2 + 0.01661y - 1.335\}.
 \end{align}
 \item[{\bf S2}] From the following generalized eigenvalue problem, 
\begin{align}
    \mqty(0.697 & -1.328 & 0.631)\mqty(0.697 \\ -1.328 \\ 0.631)V = V\Lambda,
\end{align}
we have generalized eigenvalues $\lambda = 2.6511$ and the corresponding generalized eigenvector $\bm{v} = (1.0)$.
\item[{\bf S3}] As $\lambda > \epsilon$, we have 
\begin{align}
    F_2 = \qty{f_2 = 0.5008x^2 + 0.9999xy + 0.01663x + 0.4991y^2 + 0.01661y - 1.335}\quad\text{and}\quad
    		G_2 = \qty{\,}.
\end{align}
Since $|F_2| \ne 0$, we continue with $t=3$.
\item[{\bf S1}] While we construct 
$C_3^{\mathrm{pre}}=\{f_1f_2\}$ and compute $C_{3} = \{f_1f_2\} - \{1, f_1, f_2\} 
     \mqty(1(X) & f_1(X) & f_2(X))^{\dagger}(f_1f_2)(X)$, note that
\begin{align}
     C_3(X) = \qty(E_3 - \mqty(1(X) & f_1(X) & f_2(X))\mqty(1(X) & f_1(X) & f_2(X))^{\dagger})(f_1f_2)(X) = (0, 0, 0)^{\top}
 \end{align}
 because this projects $(f_1f_2)(X)$ to the subspace that is orthogonal to the column space of a full-rank matrix $\mqty(1(X) & f_1(X) & f_2(X))$. Importantly, $C_3\ne\{0\}$ as the total degree of $f_1f_2$ is strictly greater than that of $1, f_1, f_2$.
 \item[{\bf S2}] By solving $C_3(X)^{\top}C_3(X)V = V\Lambda$, we have generalized eigenvalues $\lambda =0$ and the corresponding generalized eigenvector $\bm{v} = (1.0)$.
\item[{\bf S3}] As $\lambda \le \epsilon$, we have $F_3 = \qty{\, }, G_3 = C_3 = \qty{g_2}$.
Since $|F_3| = 0$, we terminate with $F = \{1, f_1, f_2\}$ and $G=\{g_1, g_2\}$.
\end{enumerate}
\end{example}

\begin{theorem}(\cite{livni2013vanishing}; rearranged and rephrased)\label{thm:livni}
Let $X\subset\bR^n$ be a finite set of points and $\epsilon\ge 0$. The normalized VCA with $\mathfrak{n}(C_t)^{\top}\mathfrak{n}(C_t) = E_{|C_t|}$ (i.e., the original unnormalized VCA\footnote{To be precise, the original VCA rescales the polynomials in $F_t$  at the end of the step $\textbf{S3}$. to have unit evaluation vector norm}) runs with $(X, \epsilon)$ returns sets of polynomials $F = \{f_1, \ldots, f_r\}, G = \{g_1, \ldots, g_s\}\subset \ring$ satisfying the following.
\begin{enumerate}
    \item The non-constant polynomials of $F$ are all $\epsilon$-nonvanishing for $X$. The polynomials of $G$ are all $\epsilon$-approximately vanishing for $X$.
    \item The evaluation vectors $f_1(X), \ldots, f_r(X)$ are mutually orthogonal. Further, $r \le |X|$.
    \item If $\epsilon = 0$, $r = |X|$, $\Span{F(X)} = \bR^n$, and $\ideal{G} = \videal{X}$. Further, any polynomial $h\in\ring$ can be represented as $h = f + g$ with some $f\in\mathrm{span}(F_{\le \degree{h}})$ and $g\in\langle G_{\le \degree{h}}\rangle$. Particularly, $g\in\videal{X}$ implies $g\in\langle G_{\le \degree{g}}\rangle$. 
\end{enumerate}
\end{theorem}

\subsection{Spurious vanishing problem and normalization}
Let $\alpha \cdot g$ be a non-zero polynomial $g \in \ring$ scaled by $\alpha > 0$, and let $X\subset\bR^n$ be a finite subset of points. As the extent of vanishing $\norm{(\alpha\cdot g)}_2$ can be arbitrarily controlled by $\alpha$, we need normalization of polynomial for fair measurement of the approximate vanishing of a given polynomial. 

In computational algebra, coefficient normalization has been naturally used. However, term-agnostic algorithms such as VCA do not equip normalization. Consequently, they have had a theoretical caveat that their output approximately vanishing polynomials may not be approximately vanishing after normalization, and some of the output nonvanishing polynomials may become approximately vanishing instead. \cite{kera2019spurious} addressed this problem, the \textit{spurious vanishing problem}, and proposed to normalize a polynomial (say, $h \in \cP $) using a linear mapping $\mathfrak{n}: \cP \to \bR^{\ell}$  as $\norm{\mathfrak{n}(h)}_2 = 1$. For example, $\mathfrak{n}(h)$ can be defined as the coefficient vector of $g$. 
\begin{definition} \label{def:normalization-mapping}
Let $\mathfrak{n}: \ring \to \bR^{\ell}$ be a linear mapping. 
The \textbf{normalization vector} of non-zero polynomial $h \in \cP$ refers to $\mathfrak{n}(h)$.
If $\norm{\mathfrak{n}(h)}_2 = 1$, $h$ is said to be \textbf{$\mathfrak{n}$-normalized}.
Let $\mathcal{A}$ be a basis construction algorithm that receives a finite set of points $X\subset\mathbb{R}^n$ and returns $F,G\subset\ring$ such that $\Span{F(X)} = \mathbb{R}^{m}$ and $\videal{X} = \ideal{G}$. 
If the following hold, $\mathfrak{n}$ is called a \textbf{valid normalization mapping} for $\mathcal{A}$. 
\begin{itemize}
\item For any $f\in F$, $\norm{\mathfrak{n}(f)}_2 = 0$ implies $f(X) \in \Span{F_{\le \degree{f}-1}(X)}$. 
\item For any $g\in G$, $\norm{\mathfrak{n}(g)}_2 = 0$ implies $g \in \ideal{G_{\le \degree{g}-1}}$.
\end{itemize}
Let $H = \{h_1,\ldots, h_r\} \subset \ring$.
With a slight abuse of notation, we reuse $\mathfrak{n}$ to define \textbf{normalization matrix} of $H$ by $\mathfrak{n}(H) =\mqty(\mathfrak{n}(h_2) & \mathfrak{n}(h_2) & \cdots & \mathfrak{n}(h_r))\in\mathbb{R}^{\ell\times r}$.
\end{definition}
Given a finite set of polynomials $C_t \subset \ring$, we are interested in the combination vector $\bm{v} \in \bR^{|C_t|}$ that gives us a polynomial $g = C_t\bm{v}$ that is normalized, i.e., with respect to $\norm{\mathfrak{n}(g)}_2 = 1$, and best vanishing for $X$.
\begin{align}
    \min_{\bm{v} \in \bR^{|C_t|}} \norm{g(X)}_2^2, \quad\text{s.t.}\ \ g = C_t\bm{v},\ \norm{\mathfrak{n}(g)}_2^2 = 1.
\end{align}
A simple application of the Lagrange multiplier shows that the generalized eigenvector of the smallest generalized eigenvalue of 
\begin{align}
    C_t(X)^{\top}C_t(X) = \lambda_{\min} \mathfrak{n}(g)^{\top}\mathfrak{n}(g)\bm{v}_{\min}
\end{align}
offers such $\bm{v}$. The normalized VCA computes all the generalized eigenvectors in Eq.~\eqref{eq:gep} to obtain from best vanishing polynomials to worst vanishing ones.

\begin{theorem}[\cite{kera2019spurious}; rephrased]\label{thm:basis}
Let $\mathfrak{n}$ be a valid normalization mapping for the normalized VCA. Then, Theorem~\ref{thm:livni} holds for the normalized VCA. Further, the output polynomials are all $\mathfrak{n}$-normalized.
\end{theorem}

Coefficient normalization uses $\mathfrak{n}_{\mathrm{c}}$ that returns the coefficient vector of polynomial, and $\mathfrak{n}_{\mathrm{c}}$ is a vanishing normalization mapping for the normalized VCA. As noted in Section~\ref{sec:introduction}, however, it is hard to implement coefficient normalization to VCA because polynomials of $F_t$ and $G_t$ are both sum of polynomials of $C_t$, the polynomials of which is product of polynomials, which are also sum of polynomials of $C_{t-1}$. Computing the coefficient vectors of such nested polynomials is hard via numerical computation. Even if it is possible, these polynomials are generally dense, and thus, the length of coefficient vectors grows with order $\cO\qty(\binom{n+t}{n})$. In order to achieve efficient, numerical, and term-order-free computation---or \textit{monomial-agnostic} computation, we need a new valid normalization mapping that can be efficiently computed in the VCA framework.

\section{Gradient normalization}\label{sec:gradient-normalization}
\begin{definition}
Let $X = \{\points\}\subset\bR^n$ be a finite set of points, and let $h \in \ring$.
The linear mapping  
\begin{align}\label{eq:grad-normalization-mapping}
\mathfrak{n}_{\mathrm{g}, X}: \ring \to \bR^{mn}, \quad h \mapsto 
    \mathfrak{n}_{\mathrm{g}, X}(h) = \mqty( \nabla h(\bm{x}_1)^{\top} & \cdots & \nabla h(\bm{x}_m)^{\top}  )^{\top} \in \bR^{mn}.
\end{align}
is called the \textbf{gradient normalization mapping}. The vector $\mathfrak{n}_{\mathrm{g}, X}(h)$ is called the \textbf{graident normalizatoin vector} of $h$.
\end{definition}

\begin{definition}
Let $X = \{\points\}\subset\bR^n$ be a finite set of points, and let $h \in \ring$. The \textbf{gradient semi-norm} of $h$ is
\begin{align}\label{eq:grad-norm}
	\norm{h}_{\mathrm{g},X} := \frac{1}{\gamma}\norm{\mathfrak{n}_{\mathrm{g}, X}(h)}_2 = \frac{1}{\gamma}\qty(\sum_{i=1}^m\norm{\nabla h(\bm{x}_i)}_2^2)^{1/2},
\end{align}
where $\gamma > 0$. 
\end{definition}
The choice of $\gamma$ may depend on the use case, but $\gamma = 1$ or $\gamma = 1/\sqrt{|X|}$ is reasonable. We use $\gamma=1$ in this paper unless otherwise mentioned. 

The normalization using the gradient semi-norm (i.e., gradient normalization) indeed satisfies $(\alpha \cdot h) / \norm{\alpha\cdot h}_{\mathrm{g}, X} = h / \norm{h}_{\mathrm{g}, X}$, and thus, the spurious vanishing problem can be avoided if $\norm{h}_{\mathrm{g}, X} \ne 0$. We will now show that the output $F, G$ of the normalized VCA can be formed only by the polynomials with positive gradient semi-norm of polynomials. Further, the gradient normalization can be performed numerically and exactly without performing differentiation by exploiting the iterative nature of the normalized VCA. 

\subsection{Validity}
We now show that $\mathfrak{n}_{\mathrm{g},X}$ is a valid normalization mapping for the normalized VCA. To this end, we prepare several lemmas, and then prove a theorem using an induction and the lemmas.

\begin{lemma}\label{lem:poly-decomposition-by-derivatives}
Any non-constant polynomial $g\in\ring$ can be represented as
\begin{align}
    g = \sum_{k=1}^n h_k\frac{\partial g}{\partial x_k} + r,
\end{align}
where $h_1, \ldots, h_n, r\in\ring$ and $\mathrm{deg}_k(r) < \mathrm{deg}_k(g)$ for $k=1,2,\ldots,n$. 
\end{lemma}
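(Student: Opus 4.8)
The plan is to prove Lemma~\ref{lem:poly-decomposition-by-derivatives} by a multivariate ``division with remainder'' type argument, peeling off one variable at a time. First I would observe that it suffices to show that for each individual variable $x_k$ we can write $g = h_k \frac{\partial g}{\partial x_k} + (\text{terms of lower degree in } x_k)$, and then iterate the construction across $k = 1, 2, \ldots, n$ while carefully checking that reducing the $x_k$-degree does not re-inflate the degree in the previously handled variables $x_1, \ldots, x_{k-1}$.

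The key single-variable step is an integration-by-parts / power-rule identity. Write $g$ as a polynomial in $x_k$ with coefficients in the remaining variables: $g = \sum_{j=0}^{d} a_j x_k^j$ where $d = \deg_k(g)$ and $a_j \in \mathbb{R}[x_1,\ldots,\hat{x}_k,\ldots,x_n]$. Then $\frac{\partial g}{\partial x_k} = \sum_{j=1}^{d} j\, a_j x_k^{j-1}$. The naive attempt is to take $h_k = \frac{x_k}{d}$, since $\frac{x_k}{d}\frac{\partial g}{\partial x_k} = \sum_{j=1}^{d} \frac{j}{d} a_j x_k^j$, which matches $g$ in the top-degree coefficient $a_d x_k^d$; hence $g - \frac{x_k}{d}\frac{\partial g}{\partial x_k}$ has $x_k$-degree strictly less than $d$. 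That already gives the result for one variable with $h_k$ linear. To handle all variables simultaneously I would set up an induction: having reduced $\deg_k$ below $\deg_k(g)$ for $k$ in some subset, I process the next variable; the subtraction $g \mapsto g - h_k \frac{\partial g}{\partial x_k}$ with $h_k$ a monomial in $x_k$ alone cannot increase the degree in any other variable $x_i$ beyond $\max$ of the degrees already present, because $\frac{\partial g}{\partial x_k}$ has $x_i$-degree at most $\deg_i(g)$ and multiplying by a pure power of $x_k$ does not touch $x_i$. So the degree bounds for already-processed variables are preserved.

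Actually, to get a clean simultaneous statement it is cleanest to argue by strong induction on $\sum_k \deg_k(g)$ (or lexicographically on the tuple of degrees): pick any $k$ with $\deg_k(g) = d \ge 1$, write $g = \frac{x_k}{d}\frac{\partial g}{\partial x_k} + g'$ where $\deg_k(g') < d$ and $\deg_i(g') \le \deg_i(g)$ for all $i$, apply the inductive hypothesis to $g'$ to get $g' = \sum_k \tilde h_k \frac{\partial g'}{\partial x_k} + r$ with $\deg_i(r) < \deg_i(g')$ for each $i$, and then re-express $\frac{\partial g'}{\partial x_k}$ in terms of $\frac{\partial g}{\partial x_k}$ and lower-order pieces, collecting everything into the desired form $g = \sum_k h_k \frac{\partial g}{\partial x_k} + r$. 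A subtlety here is that $\frac{\partial g'}{\partial x_j} \ne \frac{\partial g}{\partial x_j}$ in general, so the bookkeeping when substituting $g' = g - \frac{x_k}{d}\frac{\partial g}{\partial x_k}$ back in has to be done with care — this recombination is the main obstacle, and I expect it is handled most transparently by the one-variable-at-a-time deterministic procedure rather than an abstract induction: fix the order $x_1, x_2, \ldots, x_n$, repeatedly apply the top-coefficient cancellation in $x_1$ until $\deg_1 < \deg_1(g)$, then move to $x_2$, etc., and verify termination and the preservation of earlier degree bounds at each stage.

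The hard part will be making precise that the process terminates and that the accumulated multipliers $h_k$ and the running remainder stay polynomial with the claimed degree inequalities; once the single-variable cancellation identity $g - \tfrac{x_k}{\deg_k(g)}\tfrac{\partial g}{\partial x_k}$ is in hand, everything else is careful but routine degree accounting, so I would state that identity as the crux and then describe the iteration, deferring the explicit index-chasing.
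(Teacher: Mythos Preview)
Your approach is exactly the paper's: both use the single-variable identity $g - \tfrac{x_k}{d_k}\partial_{x_k} g$ to drop $\deg_k$ by one, then iterate over $k=1,\ldots,n$ with $h_k$ a scalar multiple of $x_k$. You also correctly flag the real obstacle, namely that after reducing in $x_1,\ldots,x_{k-1}$ and passing to an intermediate remainder $r_{k-1}$, the next step must subtract a multiple of $\partial_{x_k} g$ (the derivative of the \emph{original} $g$), not of $\partial_{x_k} r_{k-1}$. The paper's appendix glosses over this by silently writing $\partial g/\partial x_2$ where the construction actually produces $\partial r_1/\partial x_2$.

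However, your proposed fix---that subtracting $h_k\,\partial_{x_k} g$ with $h_k$ a pure monomial in $x_k$ ``cannot increase the degree in any other variable $x_i$ beyond $\max$ of the degrees already present''---does not rescue the argument. The issue is that $\partial_{x_k} g$ can have $\deg_{x_i}$ equal to $d_i=\deg_{x_i}(g)$, so the subtraction can push $\deg_{x_i}$ of the running remainder back up from $<d_i$ to $d_i$. In fact the lemma \emph{as stated} is false. Take $g = x^2y^2 + x^2 + y^2$, so $d_x=d_y=2$, $\partial_x g = 2x(y^2+1)$, $\partial_y g = 2y(x^2+1)$. A Gr\"obner basis computation shows that $\mathbb{R}[x,y]/\langle\partial_x g,\partial_y g\rangle$ has standard monomials $\{1,x,y,xy,y^2\}$ and that $g\equiv y^2$ modulo this ideal; since $y^2$ is independent of $\{1,x,y,xy\}$, no $r$ with $\deg_x(r)\le 1$ and $\deg_y(r)\le 1$ can satisfy $g - r\in\langle\partial_x g,\partial_y g\rangle$.

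What is both true and sufficient for the downstream Lemmas~\ref{lem:multiplicity-G}--\ref{lem:multiplicity-F} is the total-degree version: $g=\sum_k h_k\,\partial_{x_k} g + r$ with $\deg(r)<\deg(g)$. This follows in one line from Euler's identity applied to the top homogeneous part $g_D$ of $g$: setting $h_k=x_k/D$ gives $g-\sum_k(x_k/D)\partial_{x_k} g = g_{<D} - \tfrac{1}{D}\sum_k x_k\,\partial_{x_k} g_{<D}$, which has total degree $\le D-1$. That is the statement you should prove; the per-variable strict inequality is neither provable nor needed.
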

\begin{proof}
We present a constructive proof. For simplicity of notation, let $d_k=\mathrm{deg}_k(g)$. At $k=1$, if $d_1 = 0$, we set  $h_1=0$ and proceed to $k=2$. Otherwise, we rearrange $g$ according to the degree with respect to $x_1$ as,
\begin{align}
    g = x_1^{d_1}g_1^{(0)} + x_1^{d_1-1}g_1^{(1)} + \cdots + g_1^{(d_1)},
\end{align}
where $g_{1}^{(\tau)}\in\bR[x_2,\ldots, x_n]_{\le \tau}$ for $\tau=0,1,\ldots,d_1$. Then, we have
\begin{align}
    \frac{\partial g}{\partial x_1} = d_1 x_1^{d_1-1}g_1^{(0)} + (d_1-1)x_1^{d_1-2}g_1^{(1)} + \cdots + 0.
\end{align}
By setting $h_1 = x_1/d_1$, we have
\begin{align}
    g &= h_1\frac{\partial g}{\partial x_1} + \frac{r_1}{d_1},
\end{align}
where $r_1 = x_1^{d_1-1}g_1^{(1)} + 2x_1^{d_1-2}g_1^{(2)} \cdots + d_1g_1^{(d_1)}$. Note that $r_1$ satisfies $\mathrm{deg}_1(r_1)\le d_1-1$ and $\mathrm{deg}_{\ell}(r_1)\le d_{\ell}$ for $\ell\ne 1$. Next, we perform the same procedure for $k=2$ and $r_1$. If $d_2=0$, then set $h_2=0$ and $r_2 = r_1$, and proceed to $k=3$; otherwise, rearrange $r_1$ according to the degree with respect to $x_2$ as
\begin{align}
    r_1 = x_2^{d_2}r_2^{(0)} + x_2^{d_2-1}r_2^{(1)} + \cdots + r_2^{(d_2)},
\end{align}
where $r_{2}^{(\tau)}\in\bR[x_3,\ldots, x_n]_{\le \tau}$. By setting $h_2=x_2/d_2$, we obtain
\begin{align}
    g &= h_1\frac{\partial g}{\partial x_1} + h_2\frac{\partial g}{\partial x_2} + r_2,
\end{align}
where $r_2 = x_2^{d_2-1}r_2^{(1)} + 2x_2^{d_2-2}r_2^{(2)} \cdots + d_2r_2^{(d_2)}$ with $\mathrm{deg}_1(r_2)\le d_1-1$, $\mathrm{deg}_2(r_2)\le d_2-1$ and $\mathrm{deg}_{\ell}(r_2)\le d_{\ell}$ for $\ell\ne 1,2$. Repeat this procedure until $k=n$; then, $r:=r_n$ satisfies $\mathrm{deg}_{\ell}(r)\le d_{\ell}-1$ for all $\ell$.
\end{proof}

The following are the critical lemmas in the induction of a theorem that we will pose later.

\begin{lemma}\label{lem:multiplicity-G}
Let $X$ be a finite set of points in $\bR^n$, and let $G \subset \ring$ be a finite set of polynomials such that $\videal{X}_{\le t} = \ideal{G_{\le t}}$ for any non-negative integer $t$. For any $g \in \videal{X}_{\le t+1}$, the zero gradient semi norm $\norm{\mathfrak{n}_{\mathrm{g},X}(g)}_2 = 0$ implies $g \in \langle G_{\le t} \rangle$.
\end{lemma}
\begin{proof}
From Lemma~\ref{lem:poly-decomposition-by-derivatives}, we can represent $g \in \videal{X}_{\le t+1}$ as $g = \sum_{k=1}^n h_k\frac{\partial g}{\partial x_k} + r$ with some $h_1, \ldots h_n\in \ring$ and $r\in \rringle{t}$. From the assumption, we have $\norm{(\partial g/\partial x_k)(X)}_2=0$ and thus $\partial g/\partial x_k \in \videal{X}$ for $k=1,2,\ldots, n$. Consequently, we have $r = g - \sum_{k=1}^n h_k\frac{\partial g}{\partial x_k}\in \videal{X}$. 
Noting that all $\partial g/\partial x_1, \ldots, \partial g/\partial x_n, r$ are of degree up to $t$, we have $g \in \videal{I}_{\le t} = \ideal{G_{\le t}}$.
\end{proof}

Let $X$ be a finite set of points in $\bR^n$, and let $G \subset \ring$ be a finite set of polynomials such that $\videal{X}_{\le t} = \ideal{G_{\le t}}$ for any non-negative integer $t$. For any $g \in \videal{X}_{\le t+1}$, the zero gradient semi norm $\norm{\mathfrak{n}_{\mathrm{g},X}(g)}_2 = 0$ implies $g \in \langle G_{\le t} \rangle$.

\begin{lemma}\label{lem:multiplicity-F}
Let $X$ be a finite set of points in $\bR^n$, and let $F \subset \ring$ be a finite set of polynomials such that for any non-negative integer $t$ and $f \in \Span{F_{\le t}}$, $f(X) \in \Span{F_{\le t}(X)}$. Then, for any $f \in \Span{F_{\le t+1}}$, $\norm{\mathfrak{n}_{\mathrm{g},X}(f)}_2 = 0$ implies $f(X) \in \Span{F_{\le t}(X)}$.
\end{lemma}
\begin{proof}
From Lemma~\ref{lem:poly-decomposition-by-derivatives}, we can represent $f\in \Span{F_{\le t+1}}$ as $f = \sum_{k=1}^n h_k\frac{\partial f}{\partial x_k} + r$,
where $h_k,r \in \ring$ and $\degree{r}\le t$. 
From the assumption, we have $\norm{(\partial f/\partial x_k)(X)}_2=0$ for $k=1,2,\ldots, n$.
We thus have $f(X) = r(X) \in \Span{F_{\le t}(X)}$.  
\end{proof}

Now, we prove that $\mathfrak{n}_{\mathrm{g},X}$ is a valid normalization mapping for the normalized VCA, and with Theorem~\ref{thm:basis}, it is valid to use $\mathfrak{n}_{\mathrm{g},X}$ for the normalization in the normalized VCA.

\begin{theorem}\label{thm:valid-normalization}
Let $X\subset\mathbb{R}^n$ be a set of points. The map $\mathfrak{n}_{\mathrm{g},X}$ defined by Eq.~\eqref{eq:grad-normalization-mapping} is a valid normalization for any basis computation algorithm that receives $X$ and outputs a degree-restriction compatible pair of the vanishing ideal $\videal{X}$.
\end{theorem}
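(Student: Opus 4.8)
The plan is to verify the two bullet conditions in Definition~\ref{def:normalization-map} directly for $\mathfrak{n}_{\mathrm{g}}(\,\cdot\,;X)$, using the three lemmas just established together with the hypothesis that the algorithm $\mathcal{A}$ outputs a degree-restriction compatible pair $(F,G)$. First I would note that $\mathfrak{n}_{\mathrm{g}}(\,\cdot\,;X)$ is linear in its polynomial argument, since $g\mapsto \nabla g(X)$ is linear and $\mathrm{vec}(\cdot)$ is linear; this is the easy half of being a normalization map. The substance is the vanishing-norm implications, and these are exactly what Lemmas~\ref{lem:multiplicity-F} and~\ref{lem:multiplicity-G} are set up to give, once one unwinds that $\|\mathfrak{n}_{\mathrm{g}}(g;X)\|=0$ is equivalent to $(\partial g/\partial x_k)(X)=\boldsymbol{0}$ for all $k=1,\ldots,n$ (the sum of squared norms of columns of $\nabla g(X)$ being zero forces each column to vanish).

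The main step is an induction on the degree $t$. For the $G$-side: I would show by induction on $t$ that for every $g\in\mathcal{I}(X)\cap\mathcal{R}_n^{\le t}$ we have $g\in\langle G^t\rangle$ — but this is precisely condition~1 of degree-restriction compatibility, which holds by hypothesis on $(F,G)$, so the hypothesis of Lemma~\ref{lem:multiplicity-G} is met at every level $t$. Hence, taking any $g\in G$ with $\degree{g}=t+1$ and $\|\mathfrak{n}_{\mathrm{g}}(g;X)\|=0$, Lemma~\ref{lem:multiplicity-G} yields $g\in\langle G^t\rangle = \langle G^{\degree{g}-1}\rangle$, which is the required conclusion. (The base case $t$ small, or $\degree{g}=1$, is immediate since a degree-one polynomial with vanishing gradient is constant, hence — being vanishing — the zero polynomial, which lies in any ideal including $\langle G^{0}\rangle$.) The $F$-side is entirely parallel: condition~2 of degree-restriction compatibility guarantees the hypothesis of Lemma~\ref{lem:multiplicity-F} at each $t$, so for $f\in F$ with $\degree{f}=t+1$ and zero gradient norm, Lemma~\ref{lem:multiplicity-F} gives $f(X)\in\Span{F^t(X)} = \Span{F^{\degree{f}-1}(X)}$, as needed.

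One subtlety I would be careful about is making explicit that the degree-restriction compatibility of $(F,G)$ supplies, for \emph{every} $t$, the precise inductive hypotheses required by Lemmas~\ref{lem:multiplicity-F} and~\ref{lem:multiplicity-G}; those lemmas are phrased conditionally ("suppose that for any $\widetilde g\in\mathcal{I}(X)\cap\mathcal{R}_n^{\le t}$ ..."), so the argument is really just checking that the antecedent is exactly item~1 (resp.\ item~2) of Definition~\ref{def:deg-compatible}, restricted to degree $\le t$. A second minor point is the edge case where $\degree{g}$ or $\degree{f}$ equals $1$, where "$G^{\degree{g}-1}$" means $G^0$ and "$F^{\degree{f}-1}$" means $F^0=\{m\}$; I would dispatch this separately rather than invoking the lemmas, since Lemma~\ref{lem:poly-decomposition-by-derivatives} is stated for non-constant polynomials and the decomposition at the bottom degree is trivial. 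I do not anticipate a hard obstacle here: the real work was front-loaded into the three lemmas, and this theorem is the short bookkeeping step that assembles them against the definition. The only thing demanding a little care is cleanly translating "$\|\mathfrak{n}_{\mathrm{g}}(g;X)\|=0$" into the hypothesis "$\forall k,\ (\partial g/\partial x_k)(X)=\boldsymbol{0}$" that the lemmas consume, and confirming that the conclusions $g\in\langle G^{\degree g-1}\rangle$ and $f(X)\in\Span{F^{\degree f-1}(X)}$ are verbatim the two bullets of Definition~\ref{def:normalization-map}.
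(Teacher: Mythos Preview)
Your proposal is correct and follows essentially the same approach as the paper: note linearity of $\mathfrak{n}_{\mathrm{g}}(\,\cdot\,;X)$, then invoke Lemmas~\ref{lem:multiplicity-G} and~\ref{lem:multiplicity-F} to discharge the two bullet conditions of Definition~\ref{def:normalization-map}. In fact your write-up is more careful than the paper's three-line proof, since you explicitly check that the degree-restriction compatibility hypothesis supplies the antecedents those lemmas require and you treat the degree-one edge case separately; the paper simply asserts that the lemmas yield the result.
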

\begin{proof}
A map $\mathfrak{n}_{\mathrm{g},X}: \cP \to \bR^{mn}$ is a linear map because of the linearity of the gradient operator $\nabla$. From Lemmas~\ref{lem:multiplicity-G} and~\ref{lem:multiplicity-F}, the two requirements for the validity of the normalization map in Definition~\ref{def:normalization-mapping} are satisfied. 
\end{proof}

\subsection{Computation}
In the normalized VCA, we compute the gradient normalization vector of polynomial $h \in \ring$.
\begin{align}
    \mathfrak{n}_{\mathrm{g}, X}(h) = \mqty( \nabla h(\bm{x}_1)^{\top} & \cdots & \nabla h(\bm{x}_m)^{\top}  )^{\top} \in \bR^{mn}.
\end{align}
Unlike coefficient vectors, the length of this vector only linearly grows with respect to $n$. Further, in the normalized VCA, this vector can be computed efficiently and exactly without performing differentiation via vector operations. Indeed, let us consider the degree-$t$ step of the normalized VCA. Assume that the gradient normalization vectors are known for the polynomials computed in the lower degree. Then, for any $h \in F_t \cup G_t$, we have the following representation by construction.
\begin{align}\label{eq:linear-combination-of-Ct}
h = \sum_{c\in C_t^\mathrm{pre}} u_c c + \sum_{f\in F_{\le t-1}}v_f f,
\end{align}
where $u_c, v_f\in\mathbb{R}$. Recall that $c \in C_t^\mathrm{pre}$ is defined as the product $c = p_cq_c$ for some $p_c\in F_1$ and $q_c\in F_{t-1}$. Thus, for any $k\in\{1,\ldots, n\}$ and $\bm{x} \in X$, we have
\begin{align}
    \frac{\partial h}{\partial x_k}(\bm{x}) 
    &= \sum_{c\in C_{t}^{\mathrm{pre}}}u_c \frac{\partial (p_cq_c)}{\partial x_k}(\bm{x})
    + \sum_{f\in F_{\le t-1}}v_f \frac{\partial f}{\partial x_k}(\bm{x}) \\
    &= \sum_{c\in C_{t}^{\mathrm{pre}}}u_c q_c(\bm{x})\frac{\partial p_c}{\partial x_k}(\bm{x}) + \sum_{c\in C_{t}^{\mathrm{pre}}}u_c p_c(\bm{x})\frac{\partial q_c}{\partial x_k}(\bm{x}) + \sum_{f\in F_{\le t-1}}v_f \frac{\partial f}{\partial x_k}(\bm{x}).\label{eq:derivative-chain}
\end{align}
The evaluations $p_c(\bm{x}), q_c(\bm{x})$ have already been calculated at the iterations of the lower degrees. Further, from the assumption, we also know $(\partial p_c/\partial x_k)(\bm{x})$,  $(\partial q_c/\partial x_k)(\bm{x})$, and $(\partial f/\partial x_k)(\bm{x})$. Thus, $(\partial h/\partial x_k)(\bm{x})$ and consequently $\mathfrak{n}_{\mathrm{g},X}(h)$ can be calculated without differentiation. Note that at degree $t=1$, the gradients of the linear polynomials are the generalized eigenvectors obtained at {\bf S2}. 

Gradient normalization is the first polynomial-time method that resolves the spurious vanishing problem in a monomial-agnostic manner. The computation can be further accelerated. For example, one can heuristically select a subset $Z\subset X$ and compute gradients over these points. The subset may be determined using a clustering or coreset selection method. When the dimensionality $n$ is large, a coordinate transformation can be used~(see Appendix~\ref{app:preprocessing}).

\section{Advantages of gradient normalization}\label{sec:advantages-of-gradient-normalization}
Here, we present three advantages of gradient normalization besides its monomial-agonociticy: i) Scaling consistency, ii) fidelity to unperturbed points, and iii)   compact generating set. The first two are advantages over coefficient normalization thanks to the data-dependent nature of gradient normalization, and the third shows for the first time that the normalized VCA outputs a smaller (yet equally expressive) generating set than the unnormalized one does. 

\subsection{Scaling consistency}
Scaling of points $\alpha\cdot X = \qty{\alpha \bm{x}_1, \ldots, \alpha\bm{x}_m}$ is a widely used preprocessing for stable numerical computation. For example, Approximate Vanishing Ideal~(AVI) algorithm~\citep{heldt2009approximate} requires that input points are in the range of $[-1, 1]$, and \cite{wirth2023approximate} analyzed their Oracle Approximate Vansihing Ideal (OAVI) algorithm based on the points are in $[0,1]$. A scaling of points readily meets these conditions. An underlying assumption of the scaling is that it has little effect on basis computation. Here, we show that this is not necessarily true if one uses coefficient normalization, which is the case with the aforementioned studies. We will further show that gradient normalization resolves this issue for normalized VCA; that is, gradient-normalized VCA is \textit{scaling consistent}. We start with the following example. 

\begin{example}\label{exapmle:consistency-toy}
Let $h\in\ring$ be a polynomial with non-zero gradient norm for some $\bm{x}\in\mathbb{R}^n$. Let $\alpha\ne 0$, and $\widehat{h} = \sum_{\tau=0}^{\degree{h}} \alpha^{1-\tau}h^{(\tau)}$, where $h^{(\tau)}$ is the degree-$\tau$ part of $h$. The following holds.
\begin{align}\label{eq:scaling-consistency-toy}
\frac{\alpha h(\bm{x})}{\norm{\nabla h(\bm{x})}_2} &= \frac{\widehat{h}(\alpha\bm{x})}{\norm{\nabla \widehat{h}(\alpha\bm{x})}_2}. 
\end{align}
Notably, while $h$ and $\widehat{h}$ may be both nonlinear, the evaluation vectors of their gradient normalization relate linearly in a sense. 
\end{example}

This example might seem trivial because we define $\widehat{h}$ so that Eq.~\eqref{eq:scaling-consistency-toy} hold. However, it is \textit{not} trivial that basis computation for scaled and non-scaled points gives such a pair. 
Notably, we can prove this is the case for the gradient-normalized VCA. 
First, we examine the relationship of $h$ and $\widehat{h}$ in Example~\ref{exapmle:consistency-toy}. 
\begin{definition}[$(t,\alpha)$-degree-wise identical]\label{def:identical}
Let $\alpha \ne 0$ and let $t$ be an integer. A polynomial $\widehat{h}\in\ring$ is said to be \textbf{$(t,\alpha)$-degree-wise identical} to a polynomial $h\in\ring$ if $\widehat{h} = \sum_{\tau=0}^{\degree{h}} \alpha^{t-\tau}h^{(\tau)}$ and $h^{(\tau)}$ is the degree-$\tau$ part of $h$.
\end{definition}
\begin{example}
	$\widehat{h} = x^2y+4x+8y$ is $(3,2)$-degree-wise identical to $h = x^2y + x+ 2y$. 
\end{example}

\begin{lemma}\label{lem:degree-wise-identical}
Let $H = \{h_1,h_2,\ldots,h_r\}\subset\ring$ and $\widehat{H} = \{\widehat{h}_1,\widehat{h}_2,\ldots,\widehat{h}_r\}\subset\ring$, where $\widehat{h}_i$ is $(t,\alpha)$-degree-wise identical to $h$ for $i=1,2,\ldots,r$. 
\begin{enumerate}
    \item  Let $t'$ be a non-negative integer.
    For any nonzero vectors $\bm{w},\widehat{\bm{w}}\in\mathbb{R}^s$ such that $\widehat{\bm{w}}={\alpha}^{t'}\bm{w}$, a polynomial $\widehat{H}\widehat{\bm{w}}$ is $(t+t',\alpha)$-degree-wise identical to $H\bm{w}$.
    \item Let $X\subset\bR^n$ be a finite set of points. Then, it holds that $\widehat{h}_i(\alpha \cdot X) = {\alpha}^{t}h_i(X)$ and $\mathfrak{n}_{\mathrm{g},\alpha\cdot X}(\widehat{h}_i) = {\alpha}^{t-1}\mathfrak{n}_{\mathrm{g},X}(h_i)$ for $i= 1, \ldots, r$. 
\end{enumerate}
\end{lemma}
\begin{proof}
The claims can be shown by straightforward computation.

(1) Let $d_{\max} = \max_{i \in \{1, \ldots, r\}} \degree{h_i}$, and let $\bm{w} = (w_1, \ldots, w_r)^{\top}$. Let $h_i = \sum_{\tau=0}^{\degree{h_i}} h_i^{\tau}$ with degree-$\tau$ part $h_i^{(\tau)}$ for $i = 1, \ldots, r$. Then, we have
\begin{align}
    \widehat{H}\widehat{\bm{w}} 
    = \alpha^{t'}\sum_{i=1}^r w_i\widehat{h}_i 
    = \alpha^{t'}\sum_{i=1}^r w_i \sum_{\tau=0}^{\degree{h_i}} \alpha^{t-\tau}h_i^{(\tau)}
    = \sum_{\tau=0}^{d_{\max}} \alpha^{(t+t')-\tau}\qty(\sum_{i=1}^r w_i h_i^{(\tau)}).
\end{align}
The inner sum of the right-hand side is the degree-${\tau}$ part of $H\bm{w}$, which concludes the claim.

(2) Let $h = \sum_{\tau}h^{(\tau)}, \widehat{h} = \sum_{\tau}\widehat{h}^{(\tau)}$, where $h^{(\tau)}, \widehat{h}^{(\tau)}$ are the degree-$\tau$ parts of $h, \widehat{h}$, respectively, for $\tau = 0, \ldots, \degree{h}$.
Then, we have
\begin{align}
\widehat{h}^{(\tau)}(\alpha X) 
= {\alpha}^{t-\tau}h^{(\tau)}(\alpha X)  
= {\alpha}^{t-\tau}{\alpha}^{\tau}h^{(\tau)}(X) 
= {\alpha}^t h^{(\tau)}(X).
\end{align}
Similarly, for $k \in \{1, 2, \ldots, n\}$, 
\begin{align}
\frac{\partial\widehat{h}^{(\tau)}}{\partial x_k}(\alpha X) 
= {\alpha}^{t-\tau}\frac{\partial h^{(\tau)}}{\partial x_k}(\alpha X)  
= {\alpha}^{t-\tau} {\alpha}^{\tau-1}\frac{\partial h^{(\tau)}}{\partial x_k}(X)
= {\alpha}^{t-1} \frac{\partial h^{(\tau)}}{\partial x_k}(X),
\end{align}
resulting in $\nabla \widehat{h}^{(\tau)}(\alpha X) = {\alpha}^{t-1}\nabla h^{(\tau)}(X)$.
\end{proof}

Now, we claim that the gradient-normalized VCA is consistent with the input translation and scaling. The proof comes later.
\begin{theorem}\label{thm:consistency}
Let $X \subset \ring$ be a finite set of points. Let $\epsilon \ge 0$ and $\alpha \ne 0$. Apply the gradient-normalized VCA to $(X, \epsilon)$ to obtain $F = \{f_1,\ldots, f_r'\}$ and $G = \{g_1, \ldots, g_s'\}$. Similarly, apply it to to $(\alpha \cdot X, \abs{\alpha} \epsilon)$ to obtain $\widehat{F} = \{\widehat{f}_1,\ldots, \widehat{f}_r\}$ and $\widehat{G} = \{\widehat{g}_1, \ldots, \widehat{g}_s\}$.
By arranging the ordering of polynomials in the sets, the two results $(F, G)$ and $(\widehat{F}, \widehat{G})$ relate as follows. 
\begin{enumerate}
    \item $r = r'$ and $s = s'$. 
    \item For all $i$, $f_i$ and $g_i$ are $(1,\alpha)$-degree-wise identical to $f_i$ and $g_i$, respectively, except for the constant polynomials $f_1, \widehat{f}_{1}$.
\end{enumerate}
\end{theorem}
\begin{remark}
The constant polynomial denoted by $f_0 \ne 0$ in Algorithm~\ref{alg:NVCA} can be defined arbitrarily. If coefficient normalization is used, $f_ 0 = 1$ is a reasonable choice. For gradient normalization, we use 
$f_0=\sum_{\bm{x}\in X}\norm{\bm{x}}_{\infty}/m$, where $\norm{\,\cdot\,}_{\infty}$ denotes the $L_{\infty}$ norm,  so that $\widehat{f}_1$ becomes $(1,\alpha)$-degree identical to $f_1$ in Theorem~\ref{thm:consistency}.
\end{remark}

Theorem~\ref{thm:consistency} argues that with gradient normalization, scaling of the input points does not essentially affect the basis computation's output. 
The output basses have the same configuration regardless of the scaling, and the polynomials obtained with and without scaling relate to each other by $(1, \alpha)$-degree identicality. 

As the translation consistency of VCA is already known, gradient-normalized VCA equips translation- and scaling-consistency, which matches the following intuition: the affine variety that approximately accommodates the non-transformed points should be essentially the same as the one that approximately accommodates the transformed points. This is a reasonable assumption; however, none of the existing basis computation algorithms equip this property. Note that as for scaling, this problem cannot be resolved by a simple preprocessing that scales the points to a fixed range before calculation. This is because the preprocessing can yield a set of points, for which the reconstruction of the bases with the configuration of the true bases becomes impossible. 
This will be theoretically and empirically verified in Proposition~\ref{prop:scaling-inconsistency} and Section~\ref{sec:experiment-scaling}, respectively.

From now on, we will prove Theorem~\ref{thm:consistency}. The proof is done by induction and tracking the two processes of the gradient-normalized VCA, one for $(X, \epsilon)$ and another for $(\alpha\cdot X, \epsilon)$. Assuming the claims hold for the degree up to $t$, we show that the claim also holds for degree $t+1$ by examining the steps \textbf{S1}--\textbf{S3} of Algorithm~\ref{alg:NVCA}.

\begin{lemma}\label{lem:orthogonal-projection}
With the setup in Theorem~\ref{thm:consistency}, consider two processes of the gradient-normalized VCA, one for $(X, \epsilon)$ and another for $(\alpha\cdot X, \abs{\alpha}\epsilon)$. Assumes that the claims hold to some degree $\tau$. 
At \textbf{S1}, let $C_{\tau+1} = \{c_1, \ldots, c_k\}$ and $\widehat{C}_{\tau+1} = \{\widehat{c}_1, \ldots, \widehat{c}_{k'}\}$. Then, $k = k'$, and $\widehat{c}_i$ is $(2,\alpha)$-degree-wise identical to $c_i$ for all $i = 1, \ldots, k$.
\end{lemma}
\begin{proof}
We use the notations in Algorithm~\ref{alg:NVCA} for the first process and put $\widehat{\,\cdot\,}$ on the symbols in the second process. 

The sets of pre-candidate polynomials $C_{\tau+1}^{\mathrm{pre}}, \widehat{C}_{\tau+1}^{\mathrm{pre}}$ are generated from $(F_1,F_{\tau})$ and $(\widehat{F}_1,\widehat{F}_{\tau})$, respectively.
From the assumption, their cardinality is the same, i.e., $\abs{C_{\tau+1}^{\mathrm{pre}}} = \abs{\widehat{C}_{\tau+1}^{\mathrm{pre}}}$. Moreover, 
note that $F_1 = \widehat{F_1}$ are a set of linear polynomials, and any polynomial $q \in F_{\tau}$ is $(1, \alpha)$-degree-wise identical to the corresponding $q\in \widehat{F}_{\tau}$. Thus, any $\widehat{c}^{\mathrm{pre}} = \widehat{p}\widehat{q}$ is degree-wise-($2,\alpha$) identical to the corresponding $c^{\mathrm{pre}} = pq \in C_{\tau+1}^{\mathrm{pre}}$.

Next, the element-wise description of the orthogonal projection Eq.~\eqref{eq:orthogonalization} for $C_{\tau+1}$ and $\widehat{C}_{\tau+1}$ is respectively as follows.
\begin{align}
    c = c^{\mathrm{pre}} - F_{\le \tau}F_{\le \tau}(X)^{\dagger}c^{\mathrm{pre}}(X)\quad \text{ and }\quad
    \widehat{c} = \widehat{c}^{\mathrm{pre}} - \widehat{F}_{\le \tau}\widehat{F}_{\le \tau}(\alpha X)^{\dagger}\widehat{c}^{\mathrm{pre}}(\alpha X).
\end{align}
Let $\bm{w}=F_{\le \tau}(X)^{\dagger}c^{\mathrm{pre}}(X)$ and $\widehat{\bm{w}}=\widehat{F}_{\le \tau}(\alpha X)^{\dagger}\widehat{c}^{\mathrm{pre}}(\alpha X)$. We will now show that $\widehat{\bm{w}} = \alpha \bm{w}$. If this holds, by Lemma~\ref{lem:degree-wise-identical}, each element of $\widehat{F}_{\le \tau}\widehat{\bm{w}}$ becomes degree-wise-($2,k$) identical to the corresponding element of $F_{\le \tau}\bm{w}$. Thus, $\widehat{c}$ is $(2,\alpha)$-degree-wise identical to $c$.

First, note that the column vectors of $\widehat{F}_{\le \tau}(\alpha X)$ are mutually orthogonal by construction since the orthogonal projection makes $\mathrm{span}(\widehat{F}_{t_1}(\alpha X))$ and $\mathrm{span}(\widehat{F}_{t_2}(\alpha X))$ mutually orthogonal for any $t_1\ne t_2$, and the generalized eigenvalue decomposition makes the columns of $\widehat{F}_{t}(\alpha X)$ mutually orthogonal for each $t$. Therefore, 
\begin{align}
    \widehat{D} := \widehat{F}_{\le \tau}(\alpha X)^{\top}\widehat{F}_{\le \tau}(\alpha X)
    ={\alpha}^{2}F_{\le \tau}(X)^{\top}F_{\le \tau}(X)
    =: {\alpha}^{2} D,
\end{align}
where both $\widehat{D}$ and $D$ are diagonal matrices with positive diagonal elements.
Hence, the pseudo-inverse becomes
\begin{align}
\widehat{F}_{\le \tau}(\alpha X)^{\dagger}&=\widehat{D}^{-1}\widehat{F}_{\le \tau}(\alpha X)^{\top}
= ({\alpha}^{-2}D^{-1})(\alpha F_{\le \tau}(X)^{\top})
= {\alpha}^{-1} D^{-1}F_{\le \tau}(X)^{\top} 
= {\alpha}^{-1} F_{\le \tau}(X)^{\dagger},  
\end{align}
and thus, $\widehat{\bm{w}} = \alpha \bm{w}$.
\end{proof}

\begin{lemma}\label{lem:linear-response}
With the setup in Theorem~\ref{thm:consistency}, consider two processes of the gradient-normalized VCA, one for $(X, \epsilon)$ and another for $(\alpha\cdot X, \abs{\alpha}\epsilon)$. Assumes that the claims hold to some degree $\tau$. 
At \textbf{S2} and $\textbf{S3}$, let $F_{\tau+1} = \{f_1, \ldots, f_k\}$, $\widehat{F}_{\tau+1} = \{\widehat{f}_1, \ldots, \widehat{f}_{k'}\}$, $G_{\tau+1} = \{g_1, \ldots, g_l\}$, and $\widehat{G}_{\tau+1} = \{\widehat{g}_1, \ldots, \widehat{g}_{l'}\}$. Then, $k = k'$, $l=l'$, and  $\widehat{f}_i$ and $\widehat{g}_i$ are $(1,\alpha)$-degree-wise identical to $f_i$ and $g_i$, respectively, for each $i$.
\end{lemma}
\begin{proof}
We use the notations in Algorithm~\ref{alg:NVCA} for the first process and put $\widehat{\,\cdot\,}$ on the symbols in the second process. 
From Lemma~\ref{lem:orthogonal-projection}, each $\widehat{c} \in \widehat{C}_{\tau+1}$ is $(2, \alpha)$-degree-wise identical to the corresponding $c\in C_{\tau+1}$. Thus, from Lemma~\ref{lem:degree-wise-identical}, we have $\widehat{C}_{\tau+1}(\alpha X) = {\alpha}^{2}C_{\tau+1}(X)$ and $\gnvec{\alpha X}{\widehat{C}_{\tau+1}} = \alpha \gnvec{X}{C_{\tau+1}}$.
Using these results, we now compare two generalized eigenvalue problems. 
\begin{align}
C_{\tau+1}(X)^{\top}C_{\tau+1}(X)V
    &= \gnvec{X}{C_{\tau+1}}^{\top}\gnvec{X}{C_{\tau+1}}
    V\Lambda, \\ 
    \widehat{C}_{\tau+1}(\alpha X)^{\top}\widehat{C}_{\tau+1}(\alpha X)\widehat{V}
    &= \gnvec{\alpha X}{\widehat{C}_{\tau+1}}^{\top}\gnvec{\alpha X}{\widehat{C}_{\tau+1}}
    \widehat{V}\widehat{\Lambda}.
\end{align}
Let us simplify the notations to $N = \gnvec{X}{C_{\tau+1}}^{\top}\gnvec{X}{C_{\tau+1}}$ and $\widehat{N} = \gnvec{\alpha X}{\widehat{C}_{\tau+1}}^{\top}\gnvec{\alpha X}{\widehat{C}_{\tau+1}}$.
From $\widehat{N} = \alpha^2 N$ and $V^{\top}NV = \widehat{V}^{\top}\widehat{N}\widehat{V} = E_{\abs{C_{\tau+1}}}$, we obtain $\widehat{V} = {\alpha}^{-1}V$. Let $\bm{v}_i$ and $\widehat{\bm{v}}_i$ be the $i$-th columns of $V$ and $\widehat{V}$, respectively. From $\widehat{\bm{v}}_i={\alpha}^{-1}\bm{v}_i$ and Lemma~\ref{lem:degree-wise-identical}, $\widehat{C}_{1}\widehat{\bm{v}}_i$ is $(1,\alpha)$-degree-wise identical to $C_{\tau+1}\bm{v}_i$.
Lastly, note that polynomials in $\widehat{C}_{\tau+1}\widehat{V}$ are classified into $\widehat{F}_{\tau+1}$ or $\widehat{G}_{\tau+1}$ by the threshold $|\alpha| \epsilon$. This leads to the same classification as $F_{\tau+1}$ and $G_{\tau+1}$ by $\epsilon$. Consequently, the size of $F_{\tau+1}$ and $\widehat{F}_{\tau+1}$ is the same, and that of $G_{\tau+1}$ and $\widehat{G}_{\tau+1}$ is also the same. 
\end{proof}

We now prove Theorem~\ref{thm:consistency} using induction and Lemma~\ref{lem:linear-response}.

\begin{proof}[Proof of Theorem~\ref{thm:consistency}.]
In the linear case, i.e., $\tau = 1$, the claim is true by construction. Assume that the claim holds for degree $t = 1, \ldots, \tau$. Then, from Lemma~\ref{lem:linear-response}, the claim also holds for $t = \tau + 1$. 
\end{proof}

Lastly, we show that scaling \textit{inconsistency} of the coefficient-normalized VCA.
\begin{proposition}\label{prop:scaling-inconsistency}
Let $X\subset \bR^n$ be a finite, mean-subtracted set of points. Consider two processes of the coefficient-normalized VCA, one for $(X, \epsilon)$ and another for $(\alpha\cdot X, \epsilon_{\alpha})$, where $\epsilon, \epsilon_{\alpha} \ge 0$ and $\alpha \ne 0$. The former returns $(F, G)$, and the latter returns $(\widehat{F}, \widehat{G})$.
If $F$ contains a degree-1 polynomial, and $G$ contains any approximately (but not exactly) vanishing polynomial of degree 1, then there exists an $\alpha\ne 0$ such that $|G_2| \ne |\widehat{G}_{2}|$.
\end{proposition}
\begin{proof}
We use the notations in Algorithm~\ref{alg:NVCA} for the first process and put $\widehat{\,\cdot\,}$ on the symbols in the second process. 
By assumption, $X$ is mean-subtracted, and thus, the orthogonal projection with a constant vector $F_0(X)$, which is equivalent to mean subtraction, does not change $C^{\text{pre}}_1(X)=X$ (i.e., $C_1 = C_1^{\text{pre}}, \widehat{C}_1 = \widehat{C}_1^{\text{pre}}$), leading to $F_1 = \widehat{F}_1$. Thus, it holds that $C_2^{\text{pre}} =  \widehat{C}_2^{\text{pre}}$. Note that now, $F_1$ and $C_2^{\text{pre}}$ only contain homogeneous polynomials of degree 1 and 2, respectively. Thus, it holds that $F_1(\alpha X) = \alpha F_1(X)$ and $C_2(\alpha X) = \alpha^2C_2(X)$. Additionally, recall that the column vectors of $F_{\le 1}(X) = \mqty(F_0(X) & F_1(X))$ are mutually orthogonal by construction. Thus, it holds that 
    $F_1(X)^{\dagger} = D^{-2}F_1(X)^{\top}$,
where $D$ is the diagonal matrix with the Euclidean norm of each column vector of $F_{\le 1}(X)$ in the diagonal. Similarly, we define $D^{(\alpha)}$ for $\widehat{F}_{1}(X)$. Note that 
\begin{align}
\widehat{F}_1(\alpha X)^{\dagger} 
&= \qty(D^{(\alpha)})^{-2}\widehat{F}_1(\alpha X)^{\top}
= D^{-2}F_1(X)^{\top} 
= F_1(X)^{\dagger}.
\end{align}
Then, with $F_1 = \widehat{F}_1$ and $C_2^{\text{pre}} = \widehat{C}_2^{\text{pre}}$, the orthogonal projection at $t=2$ yields 
\begin{align}
    \widehat{C}_2 
    &= \widehat{C}_2^{\text{pre}} - \widehat{F}_{\le 1}\widehat{F}_{\le 1}(\alpha X)^{\dagger}\widehat{C}_2^{\text{pre}}(\alpha X)
     = C_2^{\text{pre}} - \alpha^2 F_{\le 1}F_{\le 1}( X)^{\dagger}C_2^{\text{pre}}(X). %
\end{align}
We also have 
\begin{align}
    \widehat{C}_2(\alpha X) 
    &= C_2^{\mathrm{pre}}(\alpha X) - \alpha^2 F_{\le 1}F_{\le 1}( X)^{\dagger}C_2^{\text{pre}}(X)
    = \alpha^2 C_2(X). \label{eq:eval-multiplicative}
\end{align}
When $\alpha$ goes to zero,
$\widehat{C}_2$ approaches to $C_2^{\mathrm{pre}}$, whereas $\widehat{C}_2(\alpha X)$, and thus any coefficient-normalized polynomial with support $\widehat{C_2}$, goes to zero in a quadratic order.  By assumption, $G_1 (=\widehat{G}_1)$ contains an approximately vanishing polynomial, the extent of vanishing of which also goes to zero as $\alpha \to 0$ but in a linear order. Therefore, there exists $1 \gg \alpha_0 > 0$ such that all degree-2 polynomials obtained from the generalized eigenvalue problem~Eq.~\eqref{eq:gep} vanish more strictly than those in $\widehat{G}_1$. Thus, one cannot set $\epsilon_{\alpha}$ to maintain $\widehat{G}_1$ and quadratic (or higher-degree) nonvanishing polynomials simultaneously. 
\end{proof}

\subsection{Fidelity to unperturbed points}
Approximately vanishing for a set of perturbed points $X\subset\mathbb{R}^n$ does not necessarily imply approximately vanishing for the set of unperturbed points $X^*$. The gap in the extent of vanishing between perturbed and unperturbed points can be significant when the polynomial is coefficient-normalized. 

\begin{example}\label{example:cons-of-coefficient-normalization}
Let a set of a single one-dimensional point $X = \{p\}\subset\mathbb{R}$. Let $p^* = p - \delta$ be an unperturbed point of $p$, where $\delta \in \bR$ is the perturbation.  
Let $g= (x-p)x^5 / \sqrt{1+p^2} \in \bR[x]$. The polynomial $g$ is coefficient-normalized and vanishing for $X$. However, $g(p^{*}) = \delta (p^*)^{5} / \sqrt{1+(p^*)^2} = O(\delta \cdot (p^*)^3)$. The effect of perturbation can be arbitrarily large by increasing the value of $p$. 
\end{example}

In contrast, for a gradient-normalized polynomial, approximately vanishing for $X$ mildly implies approximately vanishing for $X^*$, where the gap of the two extents of vanishing increases linearly according to the perturbation magnitude. 

\begin{example}\label{example:pros-of-gradient-normalization}
Let us consider the same setting in Example~\ref{example:cons-of-coefficient-normalization}. If $g$ is gradient-normalized, i.e., $\widetilde{g} := g / \abs{(dg/dx)(p)} = (x-p)$, then its evaluation at $p^*$ is $\widetilde{g}(p^*) = \delta/5 = O(\delta)$, which is not dependent on the value of $p^*$. 
\end{example}

This difference between coefficient normalization and gradient normalization arises from the fact that the latter is a data-dependent normalization. 
Although Examples~\ref{example:cons-of-coefficient-normalization} and~\ref{example:pros-of-gradient-normalization} cannot be directly generalized to multivariate and multiple-point cases, we can still prove a similar statement. The following proposition argues that when the perturbation is sufficiently small, the extent of vanishing at unperturbed points is also small. Moreover, it is bounded only by the largest perturbation and not by the norm of the points, which is not the case with coefficient normalization as shown in Example~\ref{example:cons-of-coefficient-normalization}. 

\begin{proposition}\label{prop:perturbation}
Let $X^* = \{\bm{x}_1^*, \ldots, \bm{x}_m^*\}\subset \mathbb{R}^n$ be a finite set of points, and let $X = \{\bm{x}_1, \ldots, \bm{x}_m\}$ be its perturbed version.
Let $g \in \ring$ be a gradient-normalized polynomial. Then,
$\norm{g(X)-g(X^*)}_2 \le \norm{\bm{n}_\mathrm{max}}_2 + o(\norm{\bm{n}_\mathrm{max}}_2)$, where $\bm{n}_\mathrm{max} =  \max_{i\in \{1,\ldots, m\}} \norm{\bm{n}_{i}}_2$, and $o(\,\cdot\,)$ is the Landau's small o.
\end{proposition}
\begin{proof}
 Let us denote the perturbation by $\bm{n}_{i} = \bm{x}_i^* - \bm{x}_i$ for $i=1, \ldots, m$. From the Taylor expansion, 
 \begin{align}
     \norm{g(X) - g(X^*)}_2
     &= \sqrt{\sum_{i=1}^m\qty(\nabla g(\bm{x}_i)\bm{n}_{i} + o(\norm{\bm{n}_{i}}_2^2))}^2
     \le \sqrt{\sum_{i=1}^m(\nabla g(\bm{x}_i)\bm{n}_{i})^2} + o(\norm{\bm{n}_{i}}_2). 
 \end{align}
By the Cauchy--Schwarz inequality and $\norm{\gnvec{X}{g}}_2=1$, the second term becomes $\sqrt{\sum_{\bm{x}\in X}(\nabla g(\bm{x}_i)\bm{n}_i)^2}  \le \max_{i\in \{1,\ldots, m\}} \norm{\bm{n}_{i}}_2$.
\end{proof}

This result implies that the magnitude of the difference between the evaluation vectors of perturbed and unperturbed points is proportional to the magnitude of the perturbations when the perturbations are small. 
This behavior is significantly important in practical scenarios where one must empirically adjust a proper $\epsilon$. The prior on the perturbation is typically a Gaussian distribution $\mathcal{N}(\bm{0}, \epsilon^2 E_n)$, which assumes the Euclidean distance from a perturbed point to the unperturbed point is at most $\epsilon$ with a moderate probability. This is a soft thresholding based on the geometrical distance, which can be empirically estimated, e.g., by repeatedly measuring data points. However, in the approximate computation of vanishing ideals, the threshold $\epsilon$ is set for the magnitude of the evaluation values of the polynomials at points. Such an algebraic distance is difficult to estimate without knowing the polynomials. Proposition~\ref{prop:perturbation} provides a (locally) linear relation between the evaluation of polynomials and the geometrical distance. Thus, it is reasonable to set $\epsilon$ solely based on the estimated perturbation magnitude. 

The geometrical distance of approximately vanishing polynomials was also considered in~\citep{fassino2013simple}. As in our analysis, they used the first-order approximation. The main differences between their work and our work are: (i) they focus on finding a single polynomial rather than a basis, (ii) their upper bound is output-sensitive, and (iii) there is a possibility that no polynomial is found with their algorithm. This is because, with coefficient normalization, we cannot ensure a moderate gradient of polynomials around the given points. 

\subsection{Smaller generating set}
The unnormalized VCA may contain spurious vanishing polynomials. A most evident example is the set (or list) of degree-2 pre-candidate polynomials $C_2^{\mathrm{pre}} = \qty{pq \mid p,q \in F_1}$, which, and consequently $C_2$, contain duplicates by construction. Thus, the best vanishing linear combination is the difference of the duplicates, which theoretically gives exact vanishing. Normalization excludes the zero polynomial as the generalized eigenvalue of the corresponding generalized eigenvector is infinity. Similarly, the polynomials that are approximately vanishing only because of their small coefficients (i.e., spurious vanishing polynomials) become nonvanishing once normalized.

In the original study of VCA~\citep{livni2013vanishing}, the size of $G$ was only loosely upper bounded as $|G| \le m^2\min\{m, n\}$ partly because it could include spuriously vanishing polynomials. Recently, \cite{wirth2022conditional} proposed a monomial-\textit{aware} basis computation algorithm, which provides a more compact set of generators than VCA. The upper bound associated with their method is $|G| \le mn$. We now show that the normalized VCA attains $|G| \le (m-n)n$. 
\begin{lemma}\label{lem:ub-number-of-nonvanishing}
Let $(F, G)$ be the output of the normalized VCA given $X\subset\mathbb{R}^n$ and $\epsilon \ge 0$. It holds that $|F_{\le t}| = \sum_{\tau=0}^{t}|F_{\tau}| \le \binom{n+t}{n}$ for any non-negative integer $t$. When the equality holds, $\abs{G_t} = 0$.
\end{lemma}
\begin{proof}
Polynomials of degree at most $t$ are linear combinations of at most $\binom{n+t}{n}$ terms. We show that either normalized or unnormalized VCA outputs a set of nonvanishing polynomials $F = \{f_1, \ldots, f_r\}$ with linearly independent coefficient vectors. If so, we immediately have $|F_{\le t}| \le \binom{n+t}{n}$. Let $\Phi(X)\in \bR^{|X| \times \binom{n+t}{n}}$ be the matrix whose column vectors are evaluation vectors of all terms of degree at most $t$. Then, we have $F(X) = \Phi(X)\mqty(\bm{u}_1 & \cdots & \bm{u}_r)$ from coefficient vector $\bm{u}_1, \ldots, \bm{u}_r$. As the rank of $F(X)$ is a full rank matrix, $\bm{u}_1, \ldots, \bm{u}_r$ should be linearly independent. 
When $|F_{\le t}| = \binom{n+t}{n}$, any degree-$t$ normalized polynomial is a normalized $\epsilon$-nonvanishing polynomial. Thus, $G_t$ is an empty set.
This concludes the proof.
\end{proof}
\begin{proposition}\label{prop:ub-basis-size}
Let $(F, G)$ be the output of the coefficient- or gradient-normalized VCA given $X\subset\mathbb{R}^n$ and $\epsilon \ge 0$. Then, we have $|G| \le (m-n)n$. 
\end{proposition}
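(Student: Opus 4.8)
The plan is to bound $|G|$ by bounding, degree by degree, the dimension of the space of new vanishing relations that can appear. First I would recall what $F$ and $G$ encode: since $(F,G)$ is degree-restriction compatible, for each degree $t$ the evaluation vectors $F^t(X)$ span a subspace of $\mathbb{R}^{|X|}$, and $F_t$ contributes exactly $\dim \Span{F^t(X)} - \dim \Span{F^{t-1}(X)}$ new independent directions. Because $\Span{F(X)} = \mathbb{R}^{|X|}$, these increments sum to $|X|$ over all $t \ge 0$ (with $t=0$ contributing the constant, i.e. $1$). So if $d$ is the largest degree reached, $\sum_{t=0}^d |F_t| = |X|$. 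The key structural fact I would extract from the algorithm is that $|C_t| = n|F_{t-1}|$ for $t \ge 2$ (each pre-candidate is a product of one of the $n$ degree-one generators... more precisely $|F_1| \le n$ and $|C_t^{\mathrm{pre}}| \le |F_1|\,|F_{t-1}| \le n|F_{t-1}|$), and $|C_1| = n$; since $|G_t| = |C_t| - |F_t|$, we get $|G_t| \le n|F_{t-1}| - |F_t|$ for $t \ge 2$ and $|G_1| \le n - |F_1|$.

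Next I would sum these bounds. Writing $f_t = |F_t|$ and $g_t = |G_t|$, with $f_0 = 1$, we have $|G| = \sum_{t\ge 1} g_t \le (n - f_1) + \sum_{t\ge 2}(n f_{t-1} - f_t) = n\sum_{t\ge 1} f_{t-1} - \sum_{t\ge 1} f_t = n\sum_{t=0}^{d-1} f_t - \sum_{t=1}^{d} f_t$ where $d$ is the top degree (so $f_d$ is the last nonzero block and the algorithm terminates because $F_{d+1}$... actually because $|F_{d}|$ eventually triggers termination; I should be careful about exactly where termination occurs, but the telescoping is robust to this). Using $\sum_{t=0}^{d} f_t = |X|$, the first sum is $|X| - f_d$ and the second is $|X| - f_0 = |X| - 1$. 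Hence $|G| \le n(|X| - f_d) - (|X| - 1) = (n-1)|X| - n f_d + 1$. To finish I need a lower bound on $f_d$: the termination condition is $|F_{t}| = 0$, so the last block with $|F_t| > 0$ has $f_d \ge 1$; but to reach the clean bound $n(|X|-n)$ I expect we actually need $f_t \ge 1$ for the first several degrees — indeed for a zero-dimensional radical-type situation the degree-one block $F_1$ has $f_1 = n$ as long as no linear polynomial vanishes, and more generally a counting/nondegeneracy argument shows the $f_t$ cannot collapse too fast. Plugging $f_d \ge 1$ gives $|G| \le (n-1)|X| - n + 1 = (n-1)(|X|-1)$, which is weaker than claimed; so the real work is sharpening the lower bound on $\sum_{t=0}^{d-1} f_t$ versus $\sum f_t$.

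The sharper route, which I think is the intended one: instead of telescoping all the way, note that since $\Span{F(X)} = \mathbb{R}^{|X|}$ and each $F_t$ adds $f_t$ independent directions with $f_0 = 1$, we have $\sum_{t=1}^{d} f_t = |X| - 1$, and also the partial sums $s_t := \sum_{i=0}^{t} f_i$ satisfy $s_t \le |X|$ for all $t$. Then $|G| \le \sum_{t=1}^d (n f_{t-1} - f_t) + (n - f_1) $ — wait, $t=1$ already fits the pattern with $f_0 = 1$: $n f_0 - f_1 = n - f_1$. So uniformly $|G| \le \sum_{t=1}^{d}(n f_{t-1} - f_t) = n s_{d-1} - (s_d - f_0) = n s_{d-1} - s_d + 1$. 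Now use $s_d = |X|$ and the crucial observation that $s_{d-1} = |X| - f_d \le |X| - n$ when $|X| > n$, because the top nonzero block $F_d$ of a degree-restriction compatible basis for a set of $|X| > n$ points must have at least $n$ elements — intuitively, the $n$ "leading variables" directions must all still be live at the top degree, or one repackages this via the $f_1 \le n$ bound run in reverse. That step — showing $f_d \ge n$, equivalently $s_{d-1} \le |X| - n$ — is the main obstacle, and it is where the hypothesis $|X| > n$ is genuinely used; I would prove it by a dimension/structure argument on $F_1$ and the fact that multiplying a degree-$(d-1)$ nonvanishing polynomial by each $x_i - c_i$ around a point of $X$ yields $n$ independent new evaluation directions unless the variety is lower-dimensional, contradicting $|X|$ being finite of size $> n$. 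Granting $s_{d-1} \le |X| - n$, we conclude $|G| \le n(|X| - n) - |X| + 1 < n(|X|-n)$, so in particular $|G| \le n(|X|-n)$ as stated — and I would present exactly this chain, flagging the $f_d \ge n$ lemma as the one nontrivial ingredient.
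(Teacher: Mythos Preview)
Your telescoping approach has two genuine gaps, and they compound. First, the termination bookkeeping is not ``robust'' as you hope. If $d$ is the last degree with $f_d>0$, the algorithm still runs one more step: it forms $C_{d+1}$ (with $|C_{d+1}|\le n f_d$), finds $F_{d+1}=\emptyset$, places the surviving polynomials into $G_{d+1}$, and only then terminates. So $|G|=\sum_{t=1}^{d+1} g_t$, not $\sum_{t=1}^{d} g_t$. Including the missing term $g_{d+1}\le n f_d$ exactly cancels the $-n f_d$ gain you extract from $s_{d-1}=|X|-f_d$: the corrected telescope collapses to $|G|\le (n-1)|X|+1$, independent of $f_d$, which is strictly weaker than $n(|X|-n)$ whenever $|X|\le n^2$. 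Second, the flagged lemma $f_d\ge n$ is false. With $n=2$ and four generic points in $\mathbb{R}^2$ one gets $f_0=1,\ f_1=2,\ f_2=1,\ f_3=0$, so $d=2$ and $f_d=1<n$. There is no structural reason the top nonvanishing block should have size at least $n$; $f_d$ can be any value in $\{1,\ldots,|X|-1\}$ depending on where $|X|$ sits relative to the counts $\binom{n+t}{n}$.

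The paper's argument is quite different and does not telescope. It treats the problem as maximizing $\sum_t |G_t|$ over all feasible sequences $(|F_t|,|G_t|)_t$ subject to the algorithm's constraints and argues (a ``bank account'' heuristic) that the optimum defers all $G$-contributions to the terminal degree $T$, so that $|G|=|G_T|\le |C_T|=|F_1|\cdot|F_{T-1}|$. The extra ingredient missing from your approach is the role of the normalization combined with Lemma~\ref{lem:ub-number-of-nonvanishing}: in the greedy configuration $|F^\tau|=\binom{n+\tau}{n}$ for $\tau<T$, any would-be $G$-polynomial at such a degree must be the zero polynomial and is therefore dropped by either normalization, forcing $|G^\tau|=0$. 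The final bound $|F_{T-1}|\le |X|-n$ then comes from $|F|\le|X|$ and $|F_1|=n$ in this maximal configuration, not from any lower bound on the last block.
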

\begin{proof}
We consider the worst-case scenario and show that the size of $G$ is yet bounded by $(m-n)n$. Let $t$ be a non-negative integer. By construction, we have $|C_t| = |F_{t-1}||F_1|$ and $|C_t| \ge |F_t| + |G_t|$. The latter holds because $F_t, G_t$ are constructed using the generalized eigenvectors of square matrices in $\bR^{|C_t| \times |C_t|}$.
The best strategy to grow the size of final $G$ is to put as many polynomials as possible into $F_t$ at each degree $t$ so that the number of candidate polynomials in the next degree, i.e., $|C_{t+1}|$, becomes large as possible. Then, at the degree $T$ where the algorithm terminates, all the generated polynomials are put into $G_T$.
Noting that $|F_1|\le n$ and $|F|\le m$ always hold, this strategy leads to $|F_{\le t}| = \binom{n+t}{n}$ for $t = 0, \ldots, T-1$. Thus, $|G_T| \le |C_{T}| = |F_1||F_{T-1}| \le n(m-n)$. From Lemma~\ref{lem:ub-number-of-nonvanishing} and our strategy, $|G_t| = 0$ for $t = 0, \ldots, T-1$.
We obtain $G = G_T$, concluding our proof.

\end{proof}

\section{Further applications of gradient information}\label{sec:futher-applications}
In addition to normalization, we can further exploit the gradient information to boost the monomial-agnostic basis computation of vanishing ideals. We will now introduce two examples: (i) a basis-size reduction method and (ii) a method to handle positive-dimensional ideals.

\subsection{Removal of redundant basis polynomials}\label{sec:removal-of-redundant basis polynomials}
Let $X \subset \ring$ be a finite set of points, and let $G \subset \cP$ be a finite set such that $\videal{X} = \ideal{G}$. The normalized VCA with $\epsilon = 0$ gives us to compute such $G$. Proposition~\ref{prop:ub-basis-size} implies that normalization leads to a smaller size of $G$.  However, we observe that the output $G$ is still sometimes redundant; that is, for some $g \in G$, $\ideal{G} = \ideal{G \setminus \{g\}}$. Note that this is also the case with border bases, which are computed by computer-algebraic basis computation (e.g., the AVI algorithm). 
Here, we provide a numerical approach to remove such redundant polynomials to reduce the basis size. This is useful in some applications; for example, bases of vanishing ideals have been used to construct feature vectors for classification tasks~\citep{livni2013vanishing,zhao2014hand,kera2018approximate,wang2019polynomial}. The reduced bases will provide compact feature vectors, which enhance efficiency and interpretability. 

To determine redundant polynomials, a standard approach in computational algebra is to divide $g$ by the Gr\"obner basis of $\ideal{G\setminus \{g\}}$ and solve an ideal membership problem. However, this approach requires the expensive and numerically unstable computation of the Gr\"obner basis, and it is also not compatible with the approximate settings that are of our interest. 
We want to handle the redundancy in an efficient, noise-tolerant, and monomial-agnostic manner. To this end, we again resort to the gradient of the polynomials. The following conjecture argues that $g$ can be considered redundant if for any $\bm{x}\in X$, the gradient $\nabla g(\bm{x})$ belongs to the column space that is spanned by the gradients of the polynomials in $G_{\le \degree{g}-1}$.

\begin{conjecture}\label{conj:multiplicity-theorem}
Let $X \subset \bR^n$ be a finite set of points, and let $G \subset \ring$ be a finite generating set of $\videal{X}$. Suppose that for any $g \in \videal{I}$, we have $g \in \ideal{G_{\le \degree{g}}}$.
Then, it holds that $g\in \langle G_{\le \degree{g}-1}\rangle$ for some $g \in G$ if and only if for any $\bm{x}\in X$, 
\begin{align}\label{eq:span-by-other-gradient}
    \nabla g(\bm{x}) &= \sum_{g^{\prime}\in  G_{\le \degree{g}-1}} u_{g^{\prime},\bm{x}} \nabla g^{\prime}(\bm{x}),
\end{align}
for some $u_{g^{\prime},\bm{x}}\in\mathbb{R}$.
\end{conjecture}
The sufficiency can be readily proved.
From the assumption $g\in \langle G_{\le \degree{g}-1}\rangle$, we can represent $g$ as $g=\sum_{g^{\prime}\in G_{\le \degree{g}-1}}g^{\prime}h_{g^{\prime}}$ for some $\{h_{g^{\prime}}\}\subset \mathcal{P}_n$. Thus, 
\begin{align}\label{eq:grad-expansion}
    \nabla g(\bm{x}) &= \sum_{g^{\prime}\in G_{\le \degree{g}-1}} h_{g^{\prime}}(\bm{x})\nabla g^{\prime}(\bm{x})
    +g^{\prime}(\bm{x})\nabla h_{g^{\prime}}(\bm{x})
    = \sum_{g^{\prime}\in G_{\le \degree{g}-1}} h_{g^{\prime}}(\bm{x})\nabla g^{\prime}(\bm{x}).
\end{align}
We used $g^{\prime}(\bm{x})=0$ in the last equality. 

From the sufficiency, we can remove all the redundant polynomials from a basis by checking whether Eq.~\eqref{eq:span-by-other-gradient} holds. We may accidentally remove some polynomials that are not redundant because the necessity remains unproven.  Conceptually, the necessity implies that the global (symbolic) relation $g\in \langle G_{\le \degree{g}-1}\rangle$ can be inferred from the local relation Eq.~\eqref{eq:span-by-other-gradient} at finitely many points $X$. This is not true for general $g$ and $G_{\le \degree{g}-1}$. However, $g$ and $G_{\le \degree{g}-1}$ are constructed in a very restrictive way; hence, we suspect that our conjecture may be true. 

We also support the validity of using Conjecture~\ref{conj:multiplicity-theorem} from another perspective. When Eq.~\eqref{eq:span-by-other-gradient} holds, this indicates that with the basis polynomials of lower degrees, one can generate a polynomial $\widehat{g}$ that takes the same value and gradient as $g$ at all the given points; in other words, $\widehat{g}$ behaves identically to $g$ up to the first order for all the points. According to the essence of the vanishing ideal computation---identifying a polynomial only by its behavior at given points---it is reasonable to consider $g$ as ``redundant" for practical use. 

Now, we describe the removal procedure of redundant polynomials based on Conjecture~\ref{conj:multiplicity-theorem}. Given $g$ and $G_{\le \degree{g}-1} = \{g_1, \ldots, g_k\}$, we solve the following least-squares problem for each $\bm{x}\in X$: 
\begin{align}
    \min _{\bm{v}\in\mathbb{R}^{|G_{\le \degree{g}-1}|}}
    \norm{\nabla g(\bm{x}) - \mqty(\nabla g_1(\bm{x}) & \cdots & \nabla g_k(\bm{x}))^{\top} \bm{v}}.
\end{align}
Let $\nabla G(\bm{x}) := \mqty(\nabla g_1(\bm{x}) & \cdots & \nabla g_k(\bm{x})) \in \bR^{n\times k}$.
The problem has a closed-form solution $\bm{v}^{\top}=\nabla g(\bm{x})^{\top}\qty(\nabla G_{\le \degree{g}-1}(\bm{x}))^{\dagger}$. If the residual error equals zero at all the points in $X$, then $g$ is removed as a redundant polynomial. In the approximately vanishing case ($\epsilon > 0$), a threshold for the residual error needs to be set. 
Note that when $G$ is computed by the gradient-normalized VCA, for any $g \in G_{t}$ (not $G_{\le t}$), it holds that $g \notin \ideal{G_t - \{g\}}$ by construction. Thus, it is not necessary to consider the reduction within the same degree. 

\begin{figure*}
\includegraphics[width=\linewidth]{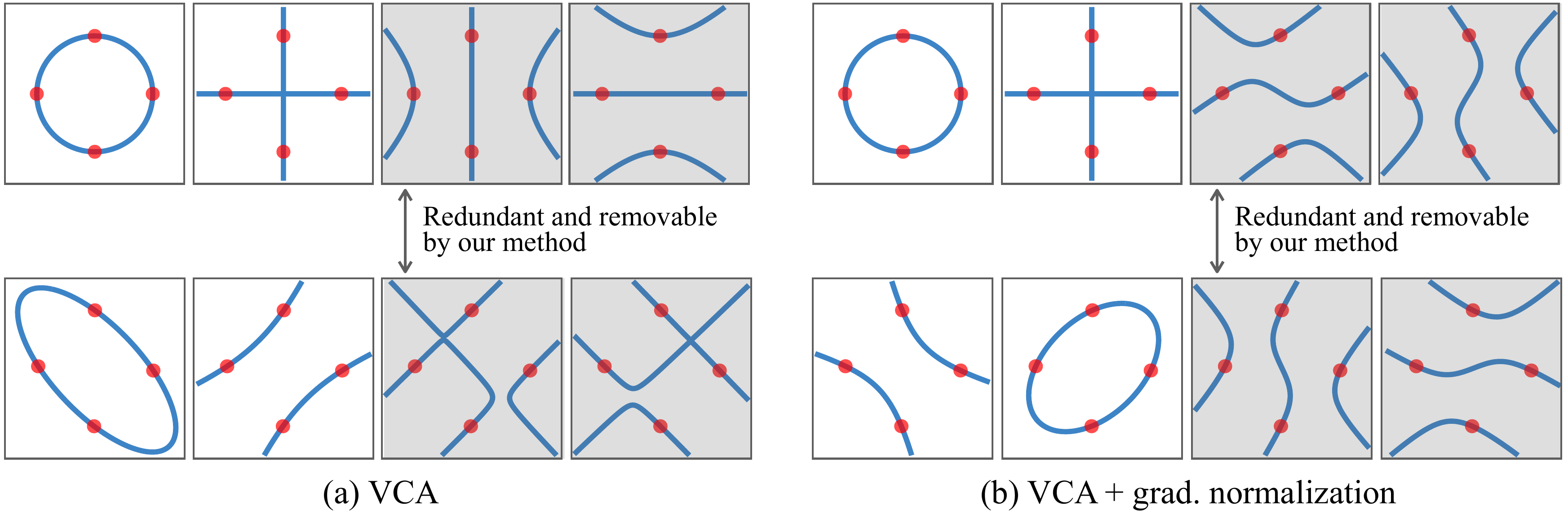}\caption{Contour plots of vanishing polynomials obtained by (a) VCA and (b) VCA with gradient normalization (right panel) on \{(0,0), (1, 0), (0, 1), (1,1)\} (top row) and the perturbed one (bottom row). Both bases contain redundant basis polynomials (shaded plots), which can be efficiently removed by the proposed method based on Conjecture~\ref{conj:multiplicity-theorem}.
\label{fig:result-redundancy}}
\end{figure*}
Lastly, we provide an example where the redundancy of a basis can be removed by our basis reduction method. We consider the vanishing ideal of $X=\{(1,0),(0,1),(-1,0),(0,-1)\}$ with no perturbation, where the exact Gr\"obner basis and polynomial division can be computed for verification. 
As shown in Figure~\ref{fig:result-redundancy}, the VCA and gradient-normalized VCA bases consist of five and four vanishing polynomials, respectively. The bases share $g_1=x^2+y^2-1$  and $g_2=xy$ (the constant scale is ignored). A simple calculation using the Gr\"obner basis of $\{g_1,g_2\}$ reveals that the other polynomials in each basis can be generated by $\{g_1,g_2\}$. Our basis reduction method successfully reduced both bases to $\{g_1,g_2\}$.

\subsection{For positive-dimensional ideals}\label{sec:positive-ideal}
The vanishing ideal of a finite set of points is zero-dimensional. However, in several applications, data points are related to positive-dimensional ideals. Therefore, the algorithms for the basis computation of vanishing ideals sometimes over-specify the affine variety of the data, and they do not generalize to unobserved data well.  of For better generalization, a typical regularization method restricts the maximum degree of basis polynomials or the basis size. However, it is difficult to set these parameters properly because the potential range of the degrees and the number of vanishing polynomials are unknown. Instead of degree or basis-size restriction, we propose using the affine variety dimension for the regularization. The gradient information is again exploited. 

\begin{definition}[Dimension of affine variety]
Let $G = \{g_1,\ldots, g_s\}\subset \ring$ be a finite set of polynomials. 
Let $\mathcal{V}\subset\mathbb{R}^n$ be the affine variety determined as the zero set of $G$. 
Dimension $\mathrm{dim}(\mathcal{V})$ of $\mathcal{V}$ is defined as $\mathrm{dim}(\mathcal{V}) = n - \min_{\bm{x}\in\mathcal{V}^*} \mathrm{rank}(\nabla G(\bm{x}))$,
where $\mathcal{V}^*$ is the set of nonsingular points of $\mathcal{V}$, and $\nabla G(\bm{x}) = \mqty(\nabla g_1(\bm{x}) & \cdots & \nabla g_s(\bm{x}))$. 
\end{definition}
In practice, we must estimate the dimension of the affine variety from finite sample points $X\subset\mathbb{R}^n$. 
Particularly, the following two numbers are of our interest. 
\begin{align}
    d_{\mathrm{max}} &= n - \min_{\bm{x}\in X,  \nabla G(\bm{x}) \ne \bm{0}} \mathrm{rank}(\nabla G(\bm{x}))\quad \text{ and }\quad
    d_{\mathrm{min}} = n - \max_{\bm{x}\in X} \mathrm{rank}(\nabla G(\bm{x})),
\end{align}
Consider the normalized VCA for $X$ and $\epsilon = 0$. Without any restriction, the computed basis $G$ generates a zero-dimensional ideal, i.e., $\videal{X}$. We can take $d_{\max}$ or $d_{\min}$ into account in the basis computation.
The normalized VCA process extends $G$ by appending $G_t$ at each degree $t$. Let $G_{\le t} = \{g_1, \ldots, g_{k_t}\}$. Then, the dimension of the subspace spanned by $\nabla g_1(\bm{x}), \ldots, \nabla g_{k_t}(\bm{x})$ (i.e., the codimension of the tangent space) at each $\bm{x} \in X$ monotonically increases along $t= 1, 2, \ldots$.
The basis computation terminates when this codimension exceeds $n-d_{\max}$ at all the points in $X$. In contrast, by restricting $d_{\min}$, the basis computation terminates when the codimension exceeds $n-d_{\min}$ at some point. 

The significant advantage of using the codimension of the tangent space for regularization rather than the degree or the basis size is that the potential range of the codimension is evident.
If one deals with data points in $\mathbb{R}^n$, then the dimension ranges from $0$ to $n$. When $d_{\max} = 0$ or $d_{\min} = 0$ is used, the full basis computation is performed. This approach is also robust against the redundancy of a basis because, as discussed in Section~\ref{sec:removal-of-redundant basis polynomials}, the redundant basis polynomials do not change the dimension of the tangent space.
This is not the case for the degree restriction or the basis-size restriction; the potential range of those values is unknown, and the existence of redundant basis polynomials can cause too early termination of the algorithm.

\section{Numerical experiments}\label{sec:experiments}
In this section, we evaluate the original, coefficient-normalized, and gradient-normalized VCAs. We perform two experiments. First, we compare the three algorithms regarding the size of bases and computational time. We ran the algorithms with generic points and a sufficiently small $\epsilon$ and then compared the size of output bases. Next, the stability of the algorithms against input perturbations and scaling is tested to demonstrate the advantage of scaling consistency. 

\paragraph{Implementation} All the experiments were run on a workstation with sixteen processors and 512GB memory. Our implementation of the algorithms is available in an open-source Python library MAVI\footnote{\url{https://github.com/HiroshiKERA/monomial-agnostic-vanishing-ideal}}, which supports three popular backends (Numpy~\citep{numpy}, JAX~\citep{jax}, and PyTorch~\citep{pytorch}) and is compatible with a large scale computation using GPUs. Sympy~\citep{sympy} was used for the calculation of coefficients. Here, we used the Numpy backend (CPU only). 

\subsection{Size of bases and runtime of the algorithms}\label{sec:generic-points}
We applied the three algorithms to sets of forty generic points of two-, three-, four-, and five-dimensional points with $\epsilon=10^{-6}$. Using generic points and small $\epsilon$ forces the algorithms to compute high-degree polynomials, which results in a large computational cost. Generic points are sampled uniformly from $[-1, 1]^n$, where $n = 2, 3, 4, 5$. 
We here present the result of a single run for each set of $n$-dimensional points because the results did not change among the several runs over random samplings of points in the preliminary experiment. Table~\ref{table:generic-points} summarizes the results. Three observations were obtained: (i) the size of the VCA bases was approximately 2-3 times larger\footnote{Note that in our VCA implementation, the degree-2 precandidate polynomials $C_2^{\text{pre}}$ is generated from the product set of $(F_1, F_1)$ without duplication (i.e., only $\binom{|F_1|}{2}$ polynomials are generated instead of $|F_1|^2$). Thus, computed VCA bases were more compact than the naive construction of $C_2^{\text{pre}}$.} than that of the others, (ii) both normalization (particularly coefficient normalization) took longer runtime as the $n$ increases, and (iii) the two normalization results in bases with the same configuration. Result~(i) indicates that the VCA suffers from the spurious vanishing problem. Results~(ii) and~(iii) show that gradient normalization runs much faster than coefficient normalization, and it is also a valid normalization method as coefficient normalization is. 

\begin{table}[t]\label{table:generic-points}
    \centering
    \caption{Summary of the bases given by three algorithms for sets of generic points. The labels \textit{+ coeff.} and \textit{+ grad.} denote the coefficient- and gradient-normalized VCA, respectively. Columns $|G|$, $|G_t|_t$, and $\textit{max deg.}$ denote the size of the whole and degree-$t$ bases and the highest degree of polynomials in the basis, respectively.
    The VCA resulted in a large size of bases, which implies the inclusion of the spuriously vanishing polynomials. Coefficient and gradient normalization resulted in bases of the same configuration. As for runtime, gradient normalization was by far more efficient than coefficient normalization.}
            \begin{tabular}{c||c|clcl}
        \toprule
            (\#points, dim.) & method & $|G|$ & $[|G_1|, |G_2|, \ldots]$ & max deg. & runtime [s] \\
        \midrule
        \multirow{3}{*}{(50, 2)}      & VCA             & 50  & [0, 0, 0, 2, 3, 4, 5, 6, 7, 13, 10]  & 10 &  5.81 $\times 10^{-1}$\\
                                      & \quad + coeff.  & \textbf{15}  & [0, 0, 0, 0, 0, 0, 0, 0, 0, 5, 10]   & 10 &  2.03\\
                                      & \quad + grad.   & \textbf{15}  & [0, 0, 0, 0, 0, 0, 0, 0, 0, 5, 10]   & 10 &  \textbf{1.87} $\mathbf{\times 10^{-2}}$\\
        \midrule
        \multirow{3}{*}{(50, 3)}      & VCA             & 98  & [0, 0, 0, 8, 15, 30, 45]             & 6 &  \textbf{3.00} $\mathbf{\times 10^{-3}}$\\
                                      & \quad + coeff.  & \textbf{40}  & [0, 0, 0, 0, 0, 6, 34]               & 6 &  8.99 \\
                                      & \quad + grad.   & \textbf{40}  & [0, 0, 0, 0, 0, 6, 34]               & 6 &  3.30 $\times 10^{-1}$\\
        \midrule
        \multirow{3}{*}{(50, 4)}      & VCA             & 145  & [0, 0, 0, 20, 65, 60]                 & 5 &  \textbf{1.85} $\mathbf{\times 10^{-2}}$\\
                                      & \quad + coeff.  & \textbf{80}  & [0, 0, 0, 0, 20, 60]  & 5 &  2.75 $\times 10$\\
                                      & \quad + grad.   & \textbf{80}  & [0, 0, 0, 0, 20, 60]  & 5 &  5.70 $\times 10^{-1}$ \\
        \midrule
        \multirow{3}{*}{(50, 5)}      & VCA             & 191  & [0, 0, 0, 46, 145]  & 4 &  \textbf{1.11} $\mathbf{\times 10^{-2}}$\\
                                      & \quad + coeff.    & \textbf{73}  & [0, 0, 0, 6, 76]  & 4 &  8.13 $\times 10$\\
                                      & \quad + grad.   & \textbf{73}  & [0, 0, 0, 6, 76]  & 4 &  2.94 \\
        \bottomrule
    \end{tabular}
\end{table}

\subsection{Robustness against input perturbations and scaling}\label{sec:experiment-scaling}
Here, we evaluate three algorithms in terms of robustness against input perturbations and scaling. 
We performed the following test. 
\begin{definition}[configuration retrieval test]
Let $G \subset\fullring$ be a finite set of polynomials, and let $T$ be the maximum degree of polynomials in $G$. An algorithm $\mathcal{A}$, which calculates a set of polynomials $\widehat{G}\subset \fullring$ from a set of points $X\subset\mathbb{R}^n$, is considered to successfully retrieve the configuration of $G$ if $\forall t = 0, 1, \ldots, T, |G_t| = |\widehat{G}_t|$.  
\end{definition}

The configuration retrieval test verifies if the algorithm outputs a set of polynomials with the same configuration as the target system up to the maximum degree of polynomials in the target system. 

\paragraph{Setup} We here consider this a necessary condition for a good approximate basis construction. 
We considered the following affine varieties $\{V_i\}_{i=1,2,3}$. 
\begin{align}
    V_1 &= \qty{(x,y) \in \mathbb{R}^2 \mid (x^2+y^2)^3 - 4x^2y^2 = 0}, \\
    V_2 &= \qty{(x, y, z) \in \mathbb{R}^3 \mid x + y - z = 0, x^3 - 9(x^2 - 3y^2) = 0}, \\
    V_3 &= \qty{(x, y, z) \in \mathbb{R}^3 \mid x^2-y^2z^2+z^3 = 0}.
\end{align}
We calculated Gr\"{o}bner basis and border basis for each variety and confirmed that these have the same configuration as each of those shown above, and used these configurations as the target ones.  
A hundred points (say, $X_i^{*}$) were sampled from each $V_i$.\footnote{The parametric representation is known for each $V_i$. A uniform sampling was performed in the parameter space. Refer to Appendix~\ref{sec:experiment-details}.} 
Each $X_i^{*}$ was preprocessed by subtracting the mean and scaling to make all the values range within [-1,1]. The sampled points were then perturbed by an additive Gaussian noise $\mathcal{N}(\mathbf{0}, \nu I)$, where $I$ denotes the identity matrix and $\nu\in \{0.05, 0.10\}$, and then recentered again. The set of such perturbed points from $X_i^{*}$ is denoted by $X_i$. Five scales $\alpha X_i, (\alpha=0.01, 0.1, 1.0, 10, 100)$ were considered.
Because of the perturbations, the choice of $\epsilon$ affects the computational results.
To circumvent the problem of choosing the problem of  $\epsilon$ of the algorithms, we performed a linear search. Thus, a run of an algorithm is considered to have passed the configuration retrieval test if there exists any such $\epsilon$. 
The linear search of $\epsilon$ was conducted with $[10^{-5}\alpha, \alpha)$ with a step size $10^{-3}\alpha$. We conducted twenty independent runs for each setting, changing the perturbation to $X_i^*$. 

\paragraph{Results} Tables~\ref{table:SRT-noise05} and~\ref{table:SRT-noise10} show the results with 5\% and 10\% noise, respectively. Gradient normalization exhibited superior robustness to the other two, and its success rates are all 1.0 for any scaling and dataset. In contrast, the VCA did not retrieve bases of the target configuration in any case because spurious approximately vanishing polynomials were included in the computed bases. Coefficient normalization succeeded for some scales (but not necessarily for $\alpha=1$), which implies that the preprocessing is crucial for this normalization. 
Another observation with gradient normalization is that the range of the valid $\epsilon$ and the extent of vanishing changes in proportion to $\alpha$. One might consider the extent of vanishing (the \textit{e.v.} columns) at $\alpha =100$ is large; however, this is only because the absolute level of the perturbations increased by the scaling. The signal-to-noise ratio of the extent of vanishing is consistent across all $\alpha$. In contrast, with coefficient normalization, the range of the valid $\epsilon$ and the extent of vanishing changes nonlinearly. In particular, for $\alpha = 10, 100$, these values are almost saturated. 

\begin{table*}
    \centering
    \caption{Summary of the configuration retrieval test of twenty independent runs with 5 \% noise. The labels \textit{+coeff.} and \textit{+grad.} denote the coefficient- and gradient-normalized VCA, respectively. Column \textit{e.v.} denotes the extent of vanishing at the unperturbed points. The values of range and the extent of vanishing are averaged values over twenty independent runs. As indicated by the success rate, the proposed gradient-weighted normalization approach is robust and consistent (see the proportional change in the range and the extent of vanishing) to the scaling, whereas coefficient normalization is not.}\label{table:SRT-noise05}
\begin{tabular}{c|crccc}
    \toprule
        dataset & method & scale $\alpha$ & range & e.v. & success rate \\
    \midrule
    \multirow{15}{*}{$V_1$} & \multirow{5}{*}{VCA}    & 0.01        & --                             & --                          & 0.00 [00/20]       \\
                       &                              & 0.1         & --                             & --                          & 0.00 [00/20]       \\
                       &                              & 1.0         & --                             & --                          & 0.00 [00/20]       \\
                       &                              & 10          & --                             & --                          & 0.00 [00/20]       \\
                       &                              & 100         & --                             & --                          & 0.00 [00/20]       \\
                       & \multirow{5}{*}{\quad + coeff. }    & 0.01        & --                             & --                          & 0.00 [00/20]        \\
                       &                              & 0.1         & --                             & --                          & 0.00 [00/20]        \\
                       &                              & 1.0         & [5.23, 5.73] $\times 10^{-3}$  & 4.56 $\times 10^{-3}$       & 0.90 [18/20]       \\
                       &                              & 10          & [4.99, 5.94] $\times 10^{+0}$  & 4.82 $\times 10^{+0}$       & \textbf{1.00 [20/20]}       \\
                       &                              & 100         & [5.35, 5.93] $\times 10^{+0}$  & 4.65 $\times 10^{+0}$       & 0.90 [18/20]       \\
                       & \multirow{5}{*}{\quad + grad. }    & 0.01         & [1.91, 2.26] $\times 10^{-3}$  & 2.05 $\times 10^{-3}$       & \textbf{1.00 [20/20]}       \\
                       &                              & 0.1         & [1.91, 2.26] $\times 10^{-2}$  & 2.05 $\times 10^{-2}$       & \textbf{1.00 [20/20]}       \\
                       &                              & 1.0         & [1.91, 2.26] $\times 10^{-1}$  & 2.05 $\times 10^{-1}$       & \textbf{1.00 [20/20]}       \\
                       &                              & 10          & [1.91, 2.26] $\times 10^{+0}$  & 2.05 $\times 10^{+0}$       & \textbf{1.00 [20/20]}       \\
                       &                              & 100         & [1.91, 2.26] $\times 10^{+1}$  & 2.05 $\times 10^{+1}$       & \textbf{1.00 [20/20]}       \\
    \midrule
    \multirow{15}{*}{$V_2$} & \multirow{5}{*}{VCA}    & 0.01        & --                             & --                          & 0.00 [00/20]       \\
                       &                              & 0.1         & --                             & --                          & 0.00 [00/20]       \\
                       &                              & 1.0         & --                             & --                          & 0.00 [00/20]       \\
                       &                              & 10          & --                             & --                          & 0.00 [00/20]       \\
                       &                              & 100         & --                             & --                          & 0.00 [00/20]       \\
                       & \multirow{5}{*}{\quad + coeff. }    & 0.01        & --                             & --                          & 0.00 [00/20]        \\
                       &                              & 0.1         & --                             & --                          & 0.00 [00/20]        \\
                       &                              & 1.0         & --                             & --                          & 0.00 [00/20]        \\
                       &                              & 10          & [1.96, 6.91] $\times 10^{+0}$  & 1.36 $\times 10^{+0}$       & \textbf{1.00 [20/20]}       \\
                       &                              & 100         & --                             & --                          & 0.00 [00/20]        \\
                       & \multirow{5}{*}{\quad + grad. }    & 0.01         & [1.43, 2.41] $\times 10^{-3}$  & 8.55 $\times 10^{-4}$       & \textbf{1.00 [20/20]}       \\
                       &                              & 0.1         & [1.43, 2.41] $\times 10^{-2}$  & 8.55 $\times 10^{-3}$       & \textbf{1.00 [20/20]}       \\
                       &                              & 1.0         & [1.43, 2.41] $\times 10^{-1}$  & 8.55 $\times 10^{-2}$       & \textbf{1.00 [20/20]}       \\
                       &                              & 10          & [1.43, 2.41] $\times 10^{+0}$  & 8.55 $\times 10^{-1}$       & \textbf{1.00 [20/20]}       \\
                       &                              & 100         & [1.43, 2.41] $\times 10^{+1}$  & 8.55 $\times 10^{+0}$       & \textbf{1.00 [20/20]}       \\
    \midrule
    \multirow{15}{*}{$V_3$} & \multirow{5}{*}{VCA}    & 0.01        & --                             & --                          & 0.00 [00/20]       \\
                       &                              & 0.1         & --                             & --                          & 0.00 [00/20]       \\
                       &                              & 1.0         & --                             & --                          & 0.00 [00/20]       \\
                       &                              & 10          & --                             & --                          & 0.00 [00/20]       \\
                       &                              & 100         & --                             & --                          & 0.00 [00/20]       \\
                       & \multirow{5}{*}{\quad + coeff. }    & 0.01        & --                             & --                          & 0.00 [00/20]        \\
                       &                              & 0.1         & [1.00, 1.00] $\times 10^{-6}$  & 1.58 $\times 10^{-7}$       & 0.05 [01/20]        \\
                       &                              & 1.0         & [1.23, 1.66] $\times 10^{-2}$  & 2.41 $\times 10^{-2}$       & \textbf{1.00 [20/20]}       \\
                       &                              & 10          & [3.94, 4.88] $\times 10^{+0}$  & 6.57 $\times 10^{+0}$       & 0.95 [19/20]        \\
                       &                              & 100         & [6.21, 6.77] $\times 10^{+0}$  & 7.68 $\times 10^{+0}$       & 0.85 [17/20]        \\
                       & \multirow{5}{*}{\quad + grad. }    & 0.01         & [1.43, 2.17] $\times 10^{-4}$  & 1.78 $\times 10^{-2}$       & \textbf{1.00 [20/20]}       \\
                       &                              & 0.1         & [1.11, 1.47] $\times 10^{-3}$  & 2.77 $\times 10^{-2}$       & \textbf{1.00 [20/20]}       \\
                       &                              & 1.0         & [1.11, 1.47] $\times 10^{-2}$  & 2.77 $\times 10^{-1}$       & \textbf{1.00 [20/20]}       \\
                       &                              & 10          & [1.11, 1.47] $\times 10^{-1}$  & 2.77 $\times 10^{+0}$       & \textbf{1.00 [20/20]}       \\
                       &                              & 100         & [1.11, 1.47] $\times 10^{+0}$  & 2.77 $\times 10^{+1}$       & \textbf{1.00 [20/20]}       \\

    \bottomrule
\end{tabular}
\end{table*}

\begin{table*}
    \centering
    \caption{Summary of the configuration retrieval test of twenty independent runs with 10\% noise. 
    }\label{table:SRT-noise10}
\begin{tabular}{c|crccc}
    \toprule
        dataset & method & scale $\alpha$ & range & e.v. & success rate \\
    \midrule
    \multirow{15}{*}{$V_1$} & \multirow{5}{*}{VCA}    & 0.01        & --                             & --                          & 0.00 [00/20]       \\
                       &                              & 0.1         & --                             & --                          & 0.00 [00/20]       \\
                       &                              & 1.0         & --                             & --                          & 0.00 [00/20]       \\
                       &                              & 10          & --                             & --                          & 0.00 [00/20]       \\
                       &                              & 100         & --                             & --                          & 0.00 [00/20]       \\
                       & \multirow{5}{*}{\quad + coeff. }    & 0.01        & --                             & --                          & 0.00 [00/20]        \\
                       &                              & 0.1         & --                             & --                          & 0.00 [00/20]        \\
                       &                              & 1.0         & [0.83, 1.01] $\times 10^{-2}$  & 7.62 $\times 10^{-3}$       & 0.95 [19/20]       \\
                       &                              & 10          & [5.21, 6.27] $\times 10^{+0}$  & 5.43 $\times 10^{+0}$       & \textbf{1.00 [20/20]}       \\
                       &                              & 100         & [5.50, 6.26] $\times 10^{+0}$  & 5.12 $\times 10^{+0}$       & 0.90 [18/20]       \\
                       & \multirow{5}{*}{\quad + grad. }    & 0.01         & [3.15, 3.64] $\times 10^{-3}$  & 3.59 $\times 10^{-3}$       & \textbf{1.00 [20/20]}       \\
                       &                              & 0.1         & [3.15, 3.64] $\times 10^{-2}$  & 3.59 $\times 10^{-2}$       & \textbf{1.00 [20/20]}       \\
                       &                              & 1.0         & [3.15, 3.64] $\times 10^{-1}$  & 3.59 $\times 10^{-1}$       & \textbf{1.00 [20/20]}       \\
                       &                              & 10          & [3.15, 3.64] $\times 10^{+0}$  & 3.59 $\times 10^{+0}$       & \textbf{1.00 [20/20]}       \\
                       &                              & 100         & [3.15, 3.64] $\times 10^{+1}$  & 3.59 $\times 10^{+1}$       & \textbf{1.00 [20/20]}       \\
    \midrule
    \multirow{15}{*}{$V_2$} & \multirow{5}{*}{VCA}    & 0.01        & --                             & --                          & 0.00 [00/20]       \\
                       &                              & 0.1         & --                             & --                          & 0.00 [00/20]       \\
                       &                              & 1.0         & --                             & --                          & 0.00 [00/20]       \\
                       &                              & 10          & --                             & --                          & 0.00 [00/20]       \\
                       &                              & 100         & --                             & --                          & 0.00 [00/20]       \\
                       & \multirow{5}{*}{\quad + coeff. }    & 0.01        & --                             & --                          & 0.00 [00/20]        \\
                       &                              & 0.1         & --                             & --                          & 0.00 [00/20]        \\
                       &                              & 1.0         & --                             & --                          & 0.00 [00/20]        \\
                       &                              & 10          & [4.95, 7.99] $\times 10^{+0}$  & 3.78 $\times 10^{+0}$       & \textbf{1.00 [20/20]}       \\
                       &                              & 100         & --                             & --                          & 0.00 [00/20]        \\
                       & \multirow{5}{*}{\quad + grad. }    & 0.01         & [3.47, 4.02] $\times 10^{-3}$  & 1.75 $\times 10^{-3}$       & \textbf{1.00 [20/20]}       \\
                       &                              & 0.1         & [3.47, 4.02] $\times 10^{-2}$  & 1.75 $\times 10^{-2}$       & \textbf{1.00 [20/20]}       \\
                       &                              & 1.0         & [3.47, 4.02] $\times 10^{-1}$  & 1.75 $\times 10^{-1}$       & \textbf{1.00 [20/20]}       \\
                       &                              & 10          & [3.47, 4.02] $\times 10^{+0}$  & 1.75 $\times 10^{+0}$       & \textbf{1.00 [20/20]}       \\
                       &                              & 100         & [3.47, 4.02] $\times 10^{+1}$  & 1.75 $\times 10^{+1}$       & \textbf{1.00 [20/20]}       \\
    \midrule
    \multirow{15}{*}{$V_3$} & \multirow{5}{*}{VCA}    & 0.01        & --                             & --                          & 0.00 [00/20]       \\
                       &                              & 0.1         & --                             & --                          & 0.00 [00/20]       \\
                       &                              & 1.0         & --                             & --                          & 0.00 [00/20]       \\
                       &                              & 10          & --                             & --                          & 0.00 [00/20]       \\
                       &                              & 100         & --                             & --                          & 0.00 [00/20]       \\
                       & \multirow{5}{*}{\quad + coeff. }    & 0.01        & --                             & --                          & 0.00 [00/20]        \\
                       &                              & 0.1         & [1.00, 1.00] $\times 10^{-6}$  & 1.18 $\times 10^{-6}$       & 0.20 [04/20]        \\
                       &                              & 1.0         & [1.27, 1.61] $\times 10^{-2}$  & 5.07 $\times 10^{-2}$       & \textbf{1.00 [20/20]}       \\
                       &                              & 10          & [3.89, 5.16] $\times 10^{+0}$  & 5.75 $\times 10^{+0}$       & \textbf{1.00 [20/20]}       \\
                       &                              & 100         & [4.61, 6.18] $\times 10^{+0}$  & 6.46 $\times 10^{+0}$       & \textbf{1.00 [20/20]}       \\
                       & \multirow{5}{*}{\quad + grad. }    & 0.01         & [0.95, 1.59] $\times 10^{-3}$  & 4.06 $\times 10^{-3}$       & \textbf{1.00 [20/20]}       \\
                       &                              & 0.1         & [0.95, 1.59] $\times 10^{-2}$  & 4.06 $\times 10^{-2}$       & \textbf{1.00 [20/20]}       \\
                       &                              & 1.0         & [0.95, 1.59] $\times 10^{-1}$  & 4.06 $\times 10^{-1}$       & \textbf{1.00 [20/20]}       \\
                       &                              & 10          & [0.95, 1.59] $\times 10^{+0}$  & 4.06 $\times 10^{+0}$       & \textbf{1.00 [20/20]}       \\
                       &                              & 100         & [0.95, 1.59] $\times 10^{+1}$  & 4.06 $\times 10^{+1}$       & \textbf{1.00 [20/20]}       \\

    \bottomrule
\end{tabular}
\end{table*}

\section{Conclusion}
This study proposed exploiting the gradient of polynomials and realized the first monomial-agnostic computation of generators of an ideal of points, aiming at spurious-vanishing-free approximate computation of vanishing ideals to bridge the gap between computational algebra and data-driven applications as machine learning. We introduced gradient normalization as an efficient and monomial-agnostic normalization and showed that its data-dependent nature provides novel properties such as scaling consistency, by which the basis of non-scaled points can be retrieved from that of translated or scaled points. We also showed that gradient normalization leads to a bounded increase in the extent of vanishing under perturbations, enabling us to select the threshold $\epsilon$ based on the geometrical intuition on the perturbations. Moreover, we demonstrated that although the gradients of polynomials at points are numerical entities, these can reveal some symbolic relations between polynomials. This result was used to reduce the redundancy of the basis. We believe that this work opens up new directions for research on data-driven monomial-agnostic computational algebra, where the existing notions, operations, and algorithms based on symbolic computation could be redefined or reformulated into a fully numerical computation.

\section*{Acknowledgement}
This work was supported by JST, ACT-X Grant Number JPMJAX200F, Japan.

\bibliographystyle{elsarticle-harv}
% \bibliography{ref}

\appendix

\renewcommand{\thesection}{\Alph{section}}

\section{A preprocessing method for faster gradient normalization}\label{app:preprocessing}
\begin{algorithm} (Preprocessing for accelerated gradient normalization)
Let $X\subset\mathbb{R}^n$ be a set of points and $\epsilon \ge 0$ be a fixed threshold. Perform the following procedures. 
\begin{enumerate}
    \item[\textbf{P1}] From $X$, calculate the mean-subtracted points $X_0$. 
    \item[\textbf{P2}] Perform the singular value decomposition (SVD) on $X_0$ to obtain $X_0 = UDV^{\top}$. 
    Let $U = \mqty(U_{\mathrm{F}} & U_{\mathrm{G}})$ and $U = \mqty(V_{\mathrm{F}} & V_{\mathrm{G}})$, where ${\,\cdot\,}_{\mathrm{F}}$ and ${\,\cdot\,}_{\mathrm{G}}$ denote the singular vectors grater than $\epsilon$ and others, respectively. 
    \item[\textbf{P3}] Return $U_{\mathrm{F}}$, $U_{\mathrm{G}}$, $V_{\mathrm{F}}$, and $V_{\mathrm{G}}$. 
\end{enumerate}
\end{algorithm}
Note that if we additionally perform $F_1 = C_1V_{\mathrm{F}}$ and $G_1 = C_1V_{\mathrm{G}}$, this procedure is equivalent to the normalized VCA (with coefficient or gradient normalization) at degree $t=1$. Thus, this preprocessing is done for free.
Now, we use new indeterminates $\{y_1, y_2, \ldots, y_m\} := \{x_1, x_2, \ldots, x_n\}V_{\mathrm{F}}$, a polynomial ring $\mathbb{R}[y_1, y_2, \ldots, y_\nu]$, and a set of points $Y = XV_{\mathrm{F}}$. The gradient-normalized VCA is then performed on $Y$. Let $(F[y_1, \ldots, y_\nu], G[y_1, \ldots, y_\nu])$ be the outputs of the algorithm. 
At inferance time, given a new set of points $X_{\mathrm{new}}$, we first transform it as $Y_{\mathrm{new}}= X_{\mathrm{new}}V_{\mathrm{F}}$ and then feed it to the computed polynomials $F[y_1, \ldots, y_\nu]$, $G[y_1, \ldots, y_\nu]$. At the same time, $X_{\mathrm{new}}$ is fed to a set of linar approximately vanishing polynomials $G_1[x_1, x_2, \ldots, x_n] = \{x_1, x_2,\ldots, x_n\}V_{\mathrm{G}}$.
As the bottleneck of the gradient normalization is the calculation of $\gnvec{X}{C_t}^{\top}\gnvec{X}{C_t}$, which costs $O(n^2\nu |C_t|)$, this part of the algorithm can get $(n/\nu )^2$-times acceleration by the preprocessing.

\section{Experiment details}\label{sec:experiment-details}
We describe the details of the experiment setup of Section~\ref{sec:generic-points}. The parametric representation of the three varieties is as follows. 
\begin{align}
    V_1 &= \qty{(x,y) \in \mathbb{R}^2 \mid x = \cos(2u)\cos(u), y = \cos(2u)\sin(u), u\in \mathbb{R}}, \\
    V_2 &= \qty{(x, y, z) \in \mathbb{R}^3 \mid x = 3(3-u^2), y = u(3-u^2), z = x + y, u\in \mathbb{R}}, \\
    V_3 &= \qty{(x, y, z) \in \mathbb{R}^3 \mid x = v(u^2-v^2), y=u, z=u^2-v^2, (u, v)\in \mathbb{R}^2}.
\end{align}
In the experiments, the points were sampled uniformly in the parameter space. In particular, we sampled a hundred points from $u \in [-1, 1)$ for $V_1$, $u \in [-2.5, 2.5)$ for $V_2$, and $(u, t) \in [-1, 1)^2$ for $V_3$. To explore the valid range of $\epsilon$ for approximate configuration retrieval, linear searching is conducted. Basis construction is repeatedly performed with $\epsilon \in [10^{-5}\alpha, \alpha)$ with a step size $10^{-3}\alpha$, where $\alpha$ denotes the scaling factor in the experiment (i.e., $\alpha \in \{0.1, 1.0, 10\}$). We set the constant polynomial $f_0 = 1/\sqrt{m}, 1, \sum_{\bm{x}\in X}\norm{\bm{x}}_{\infty}/m$ for original, coefficient-normalized, and gradient-normalized VCA for the respective consistency to the normalization.
For gradient normalization, we also used $\gamma=1/\sqrt{m}$.

\end{document}